\newtheorem{lemma}{Lemma}
\begin{document}
	
	
	\title{Proactive Eavesdropping in Relay Systems via Trajectory and Power Optimization	\thanks{Manuscript received.}}
	
	\author{Qian~Dan, 
		Hongjiang~Lei,
		Ki-Hong~Park, 
		Weijia~Lei,
		Gaofeng~Pan
		\thanks{This work was supported by the National Natural Science Foundation of China under Grant 61971080. (Corresponding author: \textit{Hongjiang~Lei}).}
		\thanks{Qian~Dan is with the School of Communications and Information Engineering, Chongqing University of Posts and Telecommunications, Chongqing 400065, China and also with the School of Computer Science, Jiangxi University of Chinese Medicine, Nanchang 330004, China (e-mail: {danqian@jxutcm.edu.cn}).}
		\thanks{Hongjiang~Lei and Weijia~Lei are with the School of Communications and Information Engineering, Chongqing University of Posts and Telecommunications, Chongqing 400065, China and also is Chongqing Key Lab of Mobile Communications Technology, Chongqing 400065, China (e-mail: leihj@cqupt.edu.cn; leiwj@cqupt.edu.cn).}	
		\thanks{K.-H. Park is with CEMSE Division, King Abdullah University of Science and Technology (KAUST), Thuwal 23955-6900, Saudi Arabia (e-mail: kihong.park@kaust.edu.sa).}
		\thanks{Gaofeng~Pan is with the School of Cyberspace Science and Technology, Beijing Institute of Technology, Beijing 100081, China (e-mail: gaofeng.pan.cn@ieee.org).}
		
	}
	
	\maketitle
	
	\begin{abstract}
		Wireless relays can effectively extend the transmission range of information. However, if relay technology is utilized unlawfully, it can amplify potential harm. 
		Effectively surveilling illegitimate relay links poses a challenging problem. 
		Unmanned aerial vehicles (UAVs) can proactively surveil wireless relay systems due to their flexible mobility. 
		This work focuses on maximizing the  eavesdropping rate (ER) of UAVs by jointly optimizing the trajectory and jamming power. 
		To address this challenge, we propose a new iterative algorithm based on block coordinate descent and successive convex approximation technologies. 
		Simulation results demonstrate that the proposed algorithm significantly enhances the ER through trajectory and jamming power optimization. 
	\end{abstract}
	
	\begin{IEEEkeywords}
		Unmanned aerial vehicle,
		proactive eavesdropping, 
		jamming power,
		and
		trajectory optimization.
	\end{IEEEkeywords}
	
	
	\section{Introduction}
	\label{sec:introduction}
	
	\subsection{Background and Related Works}
	\label{sec:Background}   
	
	The unmanned aerial vehicle (UAV) possesses the advantages of compact size, high speed, and versatility, enabling it to carry relevant equipment for various scenarios such as rescue operations and aerial photography \cite{ZengY2016CM, Hayat2016COMST}, among others. 
	The combination of UAVs and the Internet of Things (IoT) can realize real-time acquisition, processing, and transmission of information and improve UAVs' work efficiency and application value.
	The emergence of UAVs has significantly expanded the possibilities for wireless communication scenarios, presenting both broad prospects and challenges in UAV communication \cite{Bander2020JNCA, Sahoo2022IOT}. 
	In scenarios where disasters occur and ground base stations (BSs) are damaged, UAVs can be utilized to restore communication swiftly \cite{ZhaoN2019WC}. 
	Temporary establishment of hotspots by UAVs in areas experiencing high traffic volume can effectively enhance users' communication quality and experience.

	There is growing research on UAV communication, which has become a prominent area in wireless communication. 
	In Ref. \cite{YangH2022IoT}, considering the energy budget and quality of service of the IoT devices, a data dissemination problem was formulated to minimize the maximum completion time by optimizing the UAVs' trajectory and transmitting power consumption, and allocated time for the IoT devices.
	To obtain the tradeoff between the number of UAVs and the coverage rate, in \cite{ZhangC2021TCOM}, the weighted summation of the number of UAVs and the reciprocal coverage rate was minimized by jointly optimizing the positions of UAVs, the user association with UAVs and ground users (GUs), and the frequency bandwidth allocation.
	In Ref. \cite{LiY2021TWC}, the authors maximized the energy efficiency (EE) of an aerial downlink non-orthogonal multiple access (NOMA) network by jointly designing the UAV's transmit power and trajectory and the user scheduling. 
	Ref. \cite{LiuT2021TGCN} considered a cooperative system with multiple aerial relays. The minimum throughput of all the users was maximized by jointly optimizing the relays' three-dimensional (3D) trajectories and transmit power and the association between relays and transmitters.
	In Ref. \cite{LiangY2022TVT}, the authors proposed a new scheme, named time-slots pairing, for the fix-wing UAV-aided two-way cooperative systems and maximized the sum rate by joint optimizing the trajectory design and link scheduling, and power allocation.
	Ref. \cite{FengT2022TWC} investigated the data collection problem in the UAV-assisted wireless sensor network. 
	Focusing on the practical aspects of flight speed and user power constraints, they aimed to maximize the average throughput in delay-tolerant scenarios and minimize the outage probability (OP) in delay-sensitive scenarios through joint optimization of the UAV’s trajectory design and the GUs’ transmit power allocation over time. 
	In Ref. \cite{Samir2020TWC}, a UAV was utilized to collect data from IoT devices, and the number of served IoT devices was maximized by jointly optimizing the trajectory and user scheduling and allocated bandwidth. 
	In Ref. \cite{Dabiri2020CL}, the authors considered a UAV-aided free space optical decode-and-forward (DF) cooperative system, and the end-to-end OP was minimized by jointly designing the divergence angles and the elevation angle of source and destination.

	Because of the line-of-sight (LoS) channel conditions, UAV-aided communication systems are more susceptible to be wiretapped by malicious eavesdroppers than terrestrial communication systems. Thus, the security of UAV communication systems has become a prominent research area in academia. 
	In Ref. \cite{TangG2022SJ}, wireless power transfer technology was utilized in UAV communication systems and the secrecy sum rate (SSR) was maximized by jointly optimizing the wireless charging time, UAV’s trajectory and transmit power ensuring limitations on UAV flight speed, battery capacity, and wireless power supply. 
	In many works, such as Refs. \cite{ZhongC2019CL}-\cite{LeiH2023IoT}, the cooperative jamming method was utilized wherein one UAV was used as a friendly jammer to send artificial noise (AN) to fight with eavesdroppers to protect the communication between the aerial BS and GUs. The average secrecy rate (ASR) was maximized by jointly optimizing the UAVs' trajectory and communication/jamming power allocation in \cite{ZhongC2019CL}.
	The secrecy performance of dual UAV-assisted mobile edge computing (MEC) systems with time division multiple access and NOMA schemes was investigated in \cite{XuY2021TCOM}. The minimum secure computing capacity (SCC) was maximized by jointly optimizing the UAVs' trajectories, time/power allocation, and task offloading ratio (TOR).
	Considering the location-uncertainty eavesdropper in Ref. \cite{LeiH2023IoT} and the ASR of the underlay system was maximized by optimizing user scheduling and the UAVs' trajectory and transmit power. 
	In \cite{ZhouY2022TVT}, with the help of the aerial friendly jammer, the sum of the intercept probability (IP) and the maximum OP among all terrestrial links was minimized by jointly optimizing the transmitting power at the ground BS and the aerial jammer, as well as the location of the jammer. 
	The minimum secure offload rate was maximized by jointly optimizing the trajectories of UAVs, transmit power, and allocated computing frequency. 
	Moreover, the authors of Ref. \cite{HanD2020ChinaCom}  maximized the secrecy rate (SR) of the UAV-assisted MEC system by jointly optimizing the offloaded task and the position and transmit power of the legitimate UAV.
	In Ref. \cite{DuoB2021ChinaCom}, the ASR of the downlink and uplink communications were maximized, respectively, by jointly optimizing the UAV trajectory and transmit power on both the UAV and GUs.
	Ref. \cite{Savkin2020WC} considered the scenarios with the static or mobile eavesdroppers; the energy consumption was minimized by designing the UAV's trajectory while ensuring communication security.  
	In Ref. \cite{WuJ2023TVT}, the UAV was used as an integrated sensing and communication BS and the terrestrial eavesdropper was assumed to work with a fixed known trajectory. 
	The location of the legitimate user was predicted by the proposed extended Kalman filtering method and  the real-time SR was maximized by optimizing the UAV's trajectory.  
	
	The supervision of unauthorized wireless communication links has always been a focal point of official attention. 
	\textit{To the best of the authors' knowledge, there are three main difference between the physical layer security (PLS) and proactive eavesdropping (PE) (also called as wireless surveillance): 
		1) \textbf{Application scenarios}: 
		In PLS scenarios, communication users are assumed to be rightful, and the eavesdroppers are illegitimate, and various schemes are designed to prevent information leakage to the malicious or potential eavesdroppers \cite{TangG2022SJ}-\cite{ZhongC2019CL}.
		In PE scenarios, it is assumed that the eavesdroppers are legitimate and approved by government agencies and that the communication users are suspicious, such as criminals or terrorists who may jeopardize public safety \cite{ZengY2016TSP}-\cite{XuJ2017WC}.	
		2) \textbf{Performance metric}: 
		In PLS scenarios, the fundamental performance metric is the achievable SR, defined as the (non-negative) difference between the achievable rate of the communication link ($R_D$) and the achievable rate of the eavesdropping link ($R_E$). Based on SR, there are two main performance metrics: ergodic SR and secrecy outage probability, which denote the statistical average or time average of SR and the probability that the SR falls below a target rate \cite{TangG2022SJ}-\cite{ZhongC2019CL}.
		In PE scenarios, the relationship between $R_D$ and $R_E$ is crucial. When $R_E \ge R_D$, surveillance is successful, and $R_D$ is called the eavesdropping rate (ER). Otherwise, surveillance fails, and the ER is equal to zero \cite{ZengY2016TSP}-\cite{XuJ2017WC}.
		There are two main performance metrics: average ER (AER) and successful eavesdropping probability (SEP), also called as eavesdropping non-interrupt probability \cite{WuZ2023OPENJ},\cite{HuG2020CLAF}. 	
		3) \textbf{Motivation}: The motivation of schemes in PLS is to  make the ASR as larger as possible while  the aim in wireless surveillance is make $R_D$ as large as possible under the $R_E \ge R_D$ constraint \cite{ZengY2016TSP}-\cite{XuJ2017WC}.}
	
	Ref. \cite{XuD2022TWC} investigated PE by a single eavesdropper spoofing relays over multiple suspicious links, optimized the jamming power and assisting power to achieve maximum success rate and AER while considering quality of service constraints for the suspect links. 
	In \cite{XuD2023TIFS}, the authors maximized the ER by optimizing power allocation when an eavesdropper communicates with multiple suspicious users, while these users prevented PE using specific codes. 
	Ref. \cite{ZhangH2020TWC} explored the use of multi-antenna jamming for PE and characterized achievable regions for the ER through optimization of interference transmission covariance matrix performed by a legitimate monitor. 
	In Ref. \cite{ChenJ2022TVT}, the authors studied spoofing relay PE in massive multiple-input multiple-output orthogonal frequency division multiplexing systems and proposed a deep reinforcement learning (DRL)-based algorithm to solve the optimization problem for maximizing the ER. 
	In \cite{Feizi2020TCOM}, the eavesdropping non-disruption probability was maximized by the waveform design of the transmit and receive beamformers in full-duplex (FD) multi-antenna systems. 
	
	In many scenarios, UAVs also were used as aerial eavesdroppers to surveil suspicious wireless communication.
	In Ref. \cite{LuH2019TVT}, a terrestrial FD monitor surveilled and interfered with an air-to-ground (A2G) link, optimizing the jamming power to maximize the ER. 
	Similarly, in Ref. \cite{HuangM2021WCNCW}, an aerial FD UAV surveilled and interfered with an air-to-air link by jointly designing a two-dimensional (2D) trajectory and jamming power to maximize the ER. 
	Furthermore, Ref. \cite{GuoD2023TMC} demonstrated the use of multiple UAVs for PE on multiple A2G links, where the trajectory and jamming power were jointly optimized to improve performance. 
	Additionally, in Ref. \cite{DanQ2022phycom}, an amplify-and-forward (AF) FD UAV played dual roles as a relay and jammer while surveilling a ground-to-ground link through joint optimization of amplification coefficient, power allocation ratio, and trajectory to maximize the ER.
	
		Relay technology is an effective method of extending the communication range and improving the signal-to-noise ratio (SNR) at the receivers. 
		This method may also be used by malicious users (e.g., criminals, terrorists, and business spies) to commit crimes, jeopardize public safety, invade the secret databases of other companies, and so on, thus imposing new challenges in public security, as stated in \cite{XuJ2017WC}. 		
		In the illegal cooperative communication system, the relay node not only extends the communication range or improves the SNR of the illicit receiver but also provides better monitoring conditions for eavesdroppers. 
		Thus, PE in cooperative systems has been investigated in many works, such as \cite{JiangX2017SPL}-\cite{HuG2020CLAF}.
	Specifically, the relays in these works were designed to serve suspicious communications or be utilized as eavesdroppers.
	The PE problem in a two-hop DF cooperative system was studied in Ref. \cite{JiangX2017SPL}, where the ER was maximized by jointly optimizing beam formation and power allocation.
	In Ref. \cite{HuG2021SJ}, the authors proposed two PE strategies in the cooperative system with a two-way AF relay. The ER was maximized by the joint design the jamming power and beamforming vectors. 
	
	Compared with terrestrial eavesdroppers, aerial eavesdroppers have the characteristics of fast mobility and flexible application, which can greatly improve surveillance performance by jointly designing the trajectory and the jamming power. 
		Thus, UAV-aided PE in cooperative networks has been investigated in many works, such as \cite{HuG2022SPL}-\cite{HuG2020CLAF}.
	In Refs. \cite{HuG2022SPL} and \cite{HuG2020CL}, the authors investigated the PE problem in the cooperative system with an aerial DF relay. 
	Two jamming eavesdropping schemes were proposed and the analytical expressions for the eavesdropping outage probability were derived in \cite{HuG2022SPL}. Moreover, the average eavesdropping throughput and proactive ER were maximized by optimizing jamming power in Refs. \cite{HuG2022SPL} and \cite{HuG2020CL}, respectively. 
	In Ref. \cite{WuZ2023OPENJ}, the authors investigated PE in integrated satellite-terrestrial relay networks and derived the analytical expressions for the exact and asymptotic eavesdropping non-outage probability. 
	In Ref. \cite{HuG2020CLAF}, the authors studied PE in the cooperative system with multiple AF relays. Multiple hovering UAVs were utilized as jammers or eavesdroppers, and the relationship between the success probability of eavesdropping, the flight height of UAVs, and the number of UAVs was analyzed. 
	It should be noted that the relays in these works are to aid suspicious users and not for eavesdropping. 
	The PE scenarios with multi-antenna relays and a multi-antenna jammer were investigated in Ref. \cite{Moon2018TWC}, and the ER was maximized by jointly optimizing the forward precoding matrix, the covariance matrix of the jamming antennas, and the reception vector of the central monitor. 
	In Ref. \cite{GeY2022IOT}, the authors investigated wireless surveillance in the cooperative overlay cognitive radio networks. 
	The cognitive transmitter worked in FD mode and forwarded the receiver suspicious signals to the monitor.
	Moreover, both delay-tolerant and delay-sensitive scenarios were considered, wherein  
	the cognitive transmitter worked as a spoofing relay in AF FD mode or as a jammer, respectively.
	The EE was maximized by jointly optimizing the AF forward matrix and secondary transmitter precoding vectors, as well as the receiver combination vectors.
	It should be noted that, the relays in Refs. \cite{Moon2018TWC} and \cite{GeY2022IOT} were utilized for surveillance.
	{Table \ref{table1} outlines recent literature related to PE.}
	
	\begin{table*}[tb]
		\setcounter{table}{0}
		\centering
		{
			\footnotesize
			\caption{\textit{Related Works to PLS and Proactive Eavesdropping.}}
			\label{table1}
			\centering
			\scalebox{1}{
				\begin{tabular}{c|c|c|c|c|c|c}
					\Xhline{1.2pt}
					{Reference} &{UAVs}&{\makecell[c]{PLS/PE}} &{\makecell[c]{Single-hop/ \\ Dual-hop}}  &{\makecell[c]{Performance metric}}&{\makecell[c]{Research method}}&{\makecell[c]{Main parameters}}\\
					\hline
					{\cite{TangG2022SJ}} 	 &{\checkmark}&{PLS}			&{Single}	&{Sum of SR}	&{Convex optimization}	&{\makecell[c]{Trajectory, wireless charging \\duration, transmit power}}\\
					\hline
					{\cite{ZhongC2019CL}} 	 &{\checkmark}&{PLS}			&{Single}	&{ASR}	&{Convex optimization}	&{\makecell[c]{ Trajectory, transmit power}}\\
					\hline
					{\cite{XuY2021TCOM}} 	 &{\checkmark}&{PLS}			&{Single}	&{\makecell[c]{SCC}}	&{Convex optimization}	&{\makecell[c]{ Trajectory, time/power allocation, TOR}}\\
					\hline
					{\cite{LeiH2023IoT}} 	 &{\checkmark}&{PLS}			&{Single}	&{ASR}	&{Convex optimization}	&{\makecell[c]{Trajectory, user scheduling}}\\
					\hline
					{\cite{ZhouY2022TVT}} 	 &{\checkmark}&{PLS}			&{Single}	&{\makecell[c]{OP, IP} }	&{Convex optimization}	&{\makecell[c]{BS transmit power,\\ UAV  location,\\ jamming power}}\\
					\hline
					{\cite{HanD2020ChinaCom}} 	 &{\checkmark}&{PLS}		&{Single}	&{SR}	&{Convex optimization}	&{\makecell[c]{Trajectory,transmit power}}\\
					\hline
					{\cite{DuoB2021ChinaCom}} 	&{\checkmark}&{PLS} 		&{Single}	&{ASR}	&{Convex optimization}	&{\makecell[c]{UAV positions, transmit 
							power, TOR}}\\
					\hline
					{\cite{WuJ2023TVT}} 	 &{\checkmark}&{PLS}			&{Single}	&{SR}	&{Convex optimization}	&{\makecell[c]{Trajectory}}\\
					\hline
					{\cite{ZengY2016TSP}} 	 &{}&{PE}		    &{Dual}	&{ASR}	&{Convex optimization}	&{\makecell[c]{ Power
							splitting ratios, \\relay precoding matrix}}\\
					\hline
					{\cite{XuJ2017TWC}} 	 &{}&{PE}			&{Single}	&{ASR}	&{Convex optimization}	&{\makecell[c]{Jamming power allocation}}\\
					\hline
					{\cite{XuD2022TWC}} 	 &{}&{PE}			&{Single}	&{ER}	&{Convex optimization}	&{\makecell[c]{Eavesdropping strategy,\\ jamming power}}\\
					\hline
					{\cite{XuD2023TIFS}} 	&{}&{PE}			&{Single}	&{ER}	&{Convex optimization}	&{Jamming power distribution}\\
					\hline
					{\cite{ZhangH2020TWC}}	&{}&{PE}			&{Single}	&{ER}	&{Convex optimization}	&{Jamming covariance matrix}\\
					\hline
					{\cite{ChenJ2022TVT}} 	&{}&{PE}			&{Single}	&{ER}	&{DRL Optimization}		&{Power allocation}\\
					\hline
					{\cite{Feizi2020TCOM}}  &{}&{PE}			&{Dual}		&{SEP}	&{Convex optimization}	&{Transmit and receive vectors}\\
					\hline
					{\cite{LuH2019TVT}}		&{\checkmark}&{PE}	&{Single}	&{Sum of ERs}	&{Convex optimization}	&{Jamming power}\\
					\hline
					{\cite{HuangM2021WCNCW}}&{\checkmark}&{PE} &{Single}	&{AER}	&{Convex optimization}  &{\makecell[c]{Jamming power, trajectory}} \\
					\hline
					{\cite{GuoD2023TMC}}	&{\checkmark}&{PE}	&{Single}	&{\makecell[c]{AER}}	&{DRL Optimization}	&{\makecell[c]{Jamming power, trajectory}}\\			
					\hline
					{\cite{DanQ2022phycom}} &{\checkmark}&{PE} &{Single}	&{ER}	&{Convex optimization}	&{\makecell[c]{Amplification factor, power \\allocation, trajectory}}\\
					\hline
					{\cite{JiangX2017SPL}}		&{}&{PE}			&{Dual}			&{ER}	&{Convex optimization}	&{\makecell[c]{Jamming beamformer and power}}\\
					\hline
					{\cite{HuG2021SJ}}			&{}&{PE}			&{Dual}	&{ER}	&{Convex optimization}	&{\makecell[c]{Jamming beamformer and power}}\\
					\hline
					{\cite{HuG2022SPL}}		&{\checkmark}&{PE}		&{Dual}	&{ER}	&{Convex optimization}	&{Jamming power}\\
					\hline
					{\cite{HuG2020CL}}		&{\checkmark}&{PE}&{Dual}	&{ER}	&{Convex optimization}	&{Jamming power}\\
					\hline
					{\cite{WuZ2023OPENJ}}	&{}&{PE}				&{Dual}	&{SEP}	&{Performance analysis}	&{Eavesdropping mode}\\
					\hline
					{\cite{HuG2020CLAF}}	&{\checkmark}&{PE}	&{Dual}	&{SEP}	&{Performance analysis}	&{Number of UAVs, height}\\
					\hline
					{\cite{Moon2018TWC}}	&{}	&{PE}			&{Dual}	&{ER}	&{Convex optimization}	&{\makecell[c]{Precoding vector, receiving vector,\\covariance matrix of interference }}\\
					\hline
					{\cite{GeY2022IOT}}			&{}	&{PE}		&{Dual}	&{EE}	&{Convex optimization}&{Precoding vector, receiving vector}\\
					\hline
					{Our Work}					&{\checkmark}&{PE}&{Dual}	&{AER}	&{Convex optimization}	&{\makecell[c]{Jamming power, trajectory}}\\
					\Xhline{1.2pt}
				\end{tabular}
			}	
		}
		
	\end{table*}

	\subsection{Motivation and Contributions}
	\label{sec:Motivation}
	
	The previous work has indicated that PE can be achieved by transmitting jamming signals. However, these excellent works do not answer the following issue: 
	\textit{
		How do we design the flight trajectory and jamming power of the aerial legitimate eavesdropper to successfully eavesdrop on the illegal relay links? 
		What is the optimal eavesdropping performance of the PE scenarios with a legitimate aerial eavesdropper?
	}
	Therefore, this work focuses on maximizing the ER of UAVs by jointly optimizing the trajectory and jamming power.
	The main contributions of this paper are listed as follows: 
	
	\begin{enumerate}
		\item We investigate PE in the cooperative system with an AF relay located on the top of the building. 
		The aerial FD eavesdropper not only tries to wiretap the suspicious information from the source and the relay but also sends jamming noise to reduce the reception quality of the illegitimate destination. 
		The AER is optimized by designing the 2D/3D trajectory and transmission power of AN, while the condition for successful eavesdropping is guaranteed.
		To solve this complex problem, we decompose the difficult PE problem into three sub-problems, power optimization, horizontal position optimization, and vertical height optimization by block coordinate descent (BCD) method, and solve the sub-problem by successive convex approximation (SCA). 
		
		\item Relative to \cite{HuangM2021WCNCW, GuoD2023TMC,  DanQ2022phycom} in which the UAV was utilized to eavesdrop on the single-hop terrestrial wireless link, a cooperative system with an AF relay and LoS links are considered in this work. Relative to \cite{HuG2022SPL, HuG2020CL} in which UAVs were assumed to forward signals to suspicious users, a UAV in this work is utilized to eavesdrop on the cooperation systems with LoS links. Moreover, the UAV's 3D trajectory is also jointly designed with the jamming power to enhance the AER.
		
		\item Relative to \cite{HuG2020CLAF} in which PE in the cooperative system with the AF relay was investigated and the analytical expressions for the exact and asymptotic eavesdropping non-outage probability were derived, this work optimizes the ER by joint designing the UAV's 2D/3D trajectory and jamming power. Although the PE problem in cooperative systems was investigated in Refs. \cite{Moon2018TWC} and \cite{GeY2022IOT} wherein the ER was maximized by designing the precoding matrix at the relay, we maximize the ER by joint designing the UAV's 2D/3D trajectory and jamming power.
		
	\end{enumerate}
	
	\subsection{Organization}
	
	The remainder of this paper is organized as follows. 
	Sect. \ref{sec:SystemModel} introduces the illegitimate cooperative system model and Sect. \ref{sec:Problem1} presents the formulated problem. 
	In Sect. \ref{sec:ProposedAlgorithm1}, an iterative algorithm is proposed to solve the formulated problem based on the SCA and BCD techniques. 
	In Sect. \ref{sec:Problem2}, the formulated problem was extended to the scenarios where the aerial eavesdropper works at a varying altitude and the 3D trajectory and jamming power of eavesdropper are jointly optimized.
	The simulation results are analyzed in Sect. \ref{sec:Simulation}. 
	Finally, the conclusions of this work are given in Sect. \ref{sec:Conclusions}.
	
	\section{System Model}
	\label{sec:SystemModel}
	
	\begin{figure}[t]
		\centering		
		\includegraphics[width = 3.8 in]{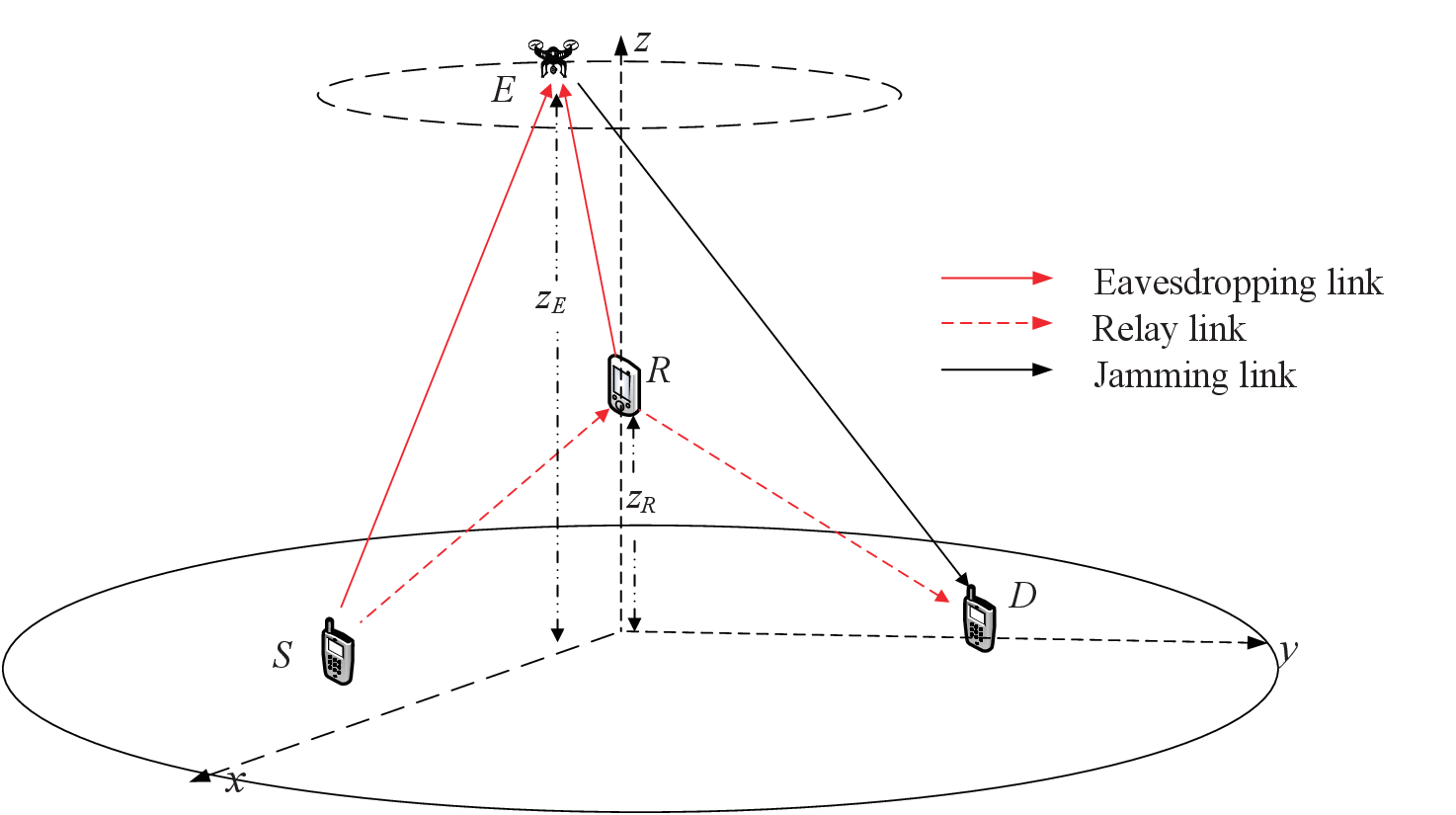}
		\caption{System model consisting of an illegal terrestrial source $\left( S \right)$, a ground destination $\left( D \right)$, an AF relay $\left( R \right)$, and an aerial FD eavesdropper $\left( E \right)$.}
		\label{fig_model}
	\end{figure}

	\begin{table}[tb]
			\footnotesize
			\setcounter{table}{1}
			\caption{\textit{List of symbol Notations.}}
			\begin{center}
				\begin{tabular}{c| c }
					\Xhline{1.2pt}
					\textbf{Notation}   	& \textbf{Description}								\\
					\hline
					${\mathbf{q}}_I$             & Initial horizontal location of $E$ 							\\
					\hline
					${\mathbf{q}}_F$               & Final horizontal location of $E$ 							\\
					\hline	
					${\mathbf{q}}_S $         & Horizontal location of $S$ 					\\
					\hline
					${\mathbf{q}}_R$     & Horizontal location of $R$ 					\\
					\hline
					${\mathbf{q}}_D$			& Horizontal location of $D$ 					\\
					\hline
					${\mathbf{q}}_E$			& Horizontal location of $E$ 					\\
					\hline
					${{H}}_I$             & Initial vertical location of $E$ 							\\
					\hline
					${{H}}_F$               & Final vertical location of $E$ 							\\
					\hline
					$z_{R}$	                & Vertical height of $R$	\\
					\hline
					$z_{E}$	 					  & Vertical height of $E$								\\
					\hline
					$v^{\max}_{xy}$	                & Maximum horizontal speed	\\
					\hline
					$v^{\max}_{z}$					  & Maximum vertical speed								\\
					\hline
					$a^{\max}_{xy}$               & Maximum horizontal acceleration  					\\
					\hline
					$a^{\max}_{z}$	             & Maximum vertical acceleration  \\ 
					\hline
					$H_{max}$					& Maximum flight altitude				\\ 
					\hline
					$H_{min}$				& Minimum flying altitude					\\ 
					\hline
					$ \sigma^2$				& The noise power			\\ 
					\hline
					$\beta _0$				& The channel power gain at the reference distance	 			\\ 
					\hline
					$P_S$ 					& Transmit power of $S$				\\ 
					\hline
					$P_R$ 					& Transmit power of $R$ 			\\ 
					\hline
					$\varepsilon$   & Algorithm convergence precision\\
					\hline
					${P_0}$&   Blade profile power\\
					\hline
					${P_i}$&    Induced power\\
					\hline
					${v_0}$&   The mean rotor induced velocity\\
					\hline
					${U_{\mathrm{tip}}}$&     Tip speed of the rotor blade \\
					\hline
					${d_0}$&        Fuselage drag ratio\\
					\hline
					$\rho$ &         Air density\\
					\hline
					$s$ &           Rotor solidity\\
					\hline
					$A$ &           Rotor disc area\\
					\hline
					$P_{\mathrm{hor}}^{\mathrm{ave} }$& Average horizontal flight power\\
					\hline
					$P_{\mathrm{ver}}^{\mathrm{ave} }$& Average vertical flight power\\	
					\hline
					$W$ & Aircraft weight \\
					\Xhline{1.2pt}
				\end{tabular}
			\end{center}
			\label{table2}
	\end{table}
	
	As shown in Fig. \ref{fig_model}, we consider a PE scenario wherein the illegitimate terrestrial source $\left( S \right)$ transmits suspicious messages to the GU $\left( D \right)$ with an AF FD relay $\left( R \right)$ located on the top of the building with a height of $z_R$. 
	Due to the obstruction of the building, there is no direct transmission link between $S$ and $D$. 
	It is assumed that there is an adaptive transmission over $S$-$R$ and $R$-$D$ links, and the transmission rate of the two links are affected by the AN sent by $E$ \cite{ZengY2016TSP}-\cite{XuJ2017WC}.
	An aerial FD eavesdropper $\left( E \right)$ not only tries to wiretap the suspicious information from $S$ and $R$ but also sends jamming noise to reduce the reception quality of $D$ to guarantee positive ER. 
	The flight period, $T$, is divided into ${N}$ time slots as ${\delta _t} = \frac{T}{N}$.
	When ${\delta _t}$ is small enough, the position of UAVs at each point can be approximated as a continuous trajectory \cite{WuQ2018TWC}.
	The horizontal coordinates of $S$ and $D$ are expressed as
	${{\mathbf{q}}_{S}} = {\left[ {{x_{S}},{y_{S}}} \right]^T}$ 
	and 
	${{\mathbf{q}}_{D}} = {\left[ {{x_{D}},{y_{D}}} \right]^T}$, 
	respectively. 
	It is assumed that the horizontal coordinate of $R$ is expressed as
	${{\mathbf{q}}_{R}} = {\left[ {{x_{R}},{y_{R}}} \right]^T}$. 
	To avoid collisions, the flight altitude $z_E$ of $E$ should be higher than the altitude of the highest obstacle in the service area \cite{CuiM2018TVT}, \cite{ZhangR2021TWC}.
	The horizontal position of $E$ in each time slot is expressed as 
	${{\mathbf{q}}_E}\left[ n \right] = {\left[ {{x_E}\left[ n \right],{y_E}\left[ n \right]} \right]^T}$, 
	where $n =0, 1, \cdots ,N$.

	Similar to \cite{YangH2022IoT, LiuT2021TGCN, ZhongC2019CL}, it is assumed all the wireless links are LoS links and the channel gain for $S$ to $E$ and $R$ are expressed as
	\begin{subequations}
		\begin{align}
			h_{SE}^2 &= \frac{{{\beta _0}}}{{{{\left\| {{{\mathbf{q}}_E}\left[ n \right] - {{\mathbf{q}}_S}} \right\|}^2} + {z_E^2}}}, \label{hse}\\
			h_{SR}^2 &= \frac{{{\beta _0}}}{{{{\left\| {{{\mathbf{q}}_R} - {{\mathbf{q}}_S}} \right\|}^2} + {\textcolor{blue}{z}_{R}^2}}}, \label{hsr}
		\end{align}
	\end{subequations}
	where $\beta _0$ denotes the channel power gain at the reference distance. 
	Similarly, the channel gain for $R$ to $E$ and $D$ and the channel gain for $E$ to $D$ are expressed as
	\begin{subequations}
		\begin{align}
			h_{RE}^2 & = \frac{{{\beta _0}}}{{{{\left\| {{{\mathbf{q}}_E}\left[ n \right] - {{\mathbf{q}}_R}} \right\|}^2} + {{\left\| {{z_E} - {z_R}} \right\|}^2}}}, \label{hre}\\
			h_{RD}^2 & = \frac{{{\beta _0}}}{{{{\left\| {{{\mathbf{q}}_R} - {{\mathbf{q}}_D}} \right\|}^2} + {{z_R^2}}}}, \label{hrd}\\
			h_{ED}^2 &= \frac{{{\beta _0}}}{{{{\left\| {{{\mathbf{q}}_E}[n] - {{\mathbf{q}}_D}} \right\|}^2} + z_E^2}}. \label{hed}
		\end{align}
	\end{subequations}

	It is assumed that the self-interference is perfect eliminated \cite{ZengY2016TSP}-\cite{XuJ2017WC},  
	then the received signal at $R$ is expressed as
	\begin{align}\label{yR}
		y_R=\sqrt {{P_S}} {h_{SR}}{x_S} + {n_R},
	\end{align}
	where $x_S$ denotes the source signal, $P_S$ is the transmit power, 
	and 
	$n_R\left[ n \right]\sim CN(0,\sigma_R^2)$ is the additive white Gaussian noise (AWGN).
	Similar to \cite{ZengY2016TSP, ChenJ2022TVT, GeY2022IOT}, the received signal at $D$ is expressed as 
	\footnote{
		For the systems with FD relay, the self-interference (SI) is assumed to experience Gaussian \cite{Feizi2020TCOM}, \cite{Moon2018TWC}. Based on the method in \cite{HuaM2020TCOM} and \cite{LeiH2023ArXivHaos}, it can be observed that the achievable rate considering SI has a similar expression to that without considering SI. 
		In other words, the results in this work also fit the scenarios wherein the SI is considered. 
		Thus, like \cite{ZhangH2020TWC}, \cite{LuH2019TVT}, and \cite{HuG2020CL}, the SI is ignored to simplify the analysis in this work. 
	}
	\begin{align}\label{yD}
		{y_D} &= \sqrt {{P_S}} K{h_{SR}}{h_{RD}}{x_S} + K{h_{RD}}{n_R} \nonumber\\
		&+ \sqrt {{P_E}} {h_{ED}}{x_{\mathrm{AN}}} + {n_D},
	\end{align}
	where 
	$K = \sqrt {\frac{{{P_R}}}{{{\sigma_R^2} + {P_S}h_{SR}^2}}}$ denotes the amplification coefficient, 
	$P_R$ is the transmit power at $R$, 
	$ {x_{\mathrm{AN}}}$ is the AN from $E$, 
	and 
	$n_D\left[ n \right]\sim CN(0,\sigma_D^2)$ is the AWGN at $D$. 
	Then the SNR of $D$ is expressed as
	\begin{align}\label{snrD}
		{\gamma _D}\left[ n \right]  = \frac{{{P_S}{K^2}h_{SR}^2h_{RD}^2}}{{\left( {1 + {K^2}h_{RD}^2} \right){\sigma ^2} + {P_E}\left[ n \right]h_{ED}^2\left[ n \right]}}.
	\end{align}
	
	Similarly, the received signal at $E$ is given by
	\begin{align}\label{yE}
		{y_E}  &= \sqrt {{P_S}} K{h_{SR}}{h_{RE}}{x_S} \nonumber\\
		&+ \sqrt {{P_S}} {h_{SE}}{x_S} + K{h_{RE}}{n_R} + {n_E},
	\end{align}
	where $n_E \sim  CN(0,\sigma_E^2)$. 
	To facilitate the analysis, it is assumed that $\sigma_R^2 = \sigma_D^2 = \sigma^2$. 
	Then the SNR at $E$ is obtained as
	\begin{align}\label{snrE}
		{\gamma _E}\left[ n \right] = \frac{{{P_S}{{\left| {{h_{SE}}\left[ n \right] + K{h_{SR}}{h_{RE}}\left[ n \right]} \right|}^2}}}{{\left( {1 + {K^2}h_{RE}^2\left[ n \right]} \right){\sigma ^2}}}.
	\end{align}
	
	The achievable rates at the receivers are expressed as
	\begin{subequations}
		\begin{align}
			{R_D}\left[ n \right] &= {\log _2}\left( {1 + {\gamma _D}\left[ n \right]} \right), \label{eq:RD}\\
			{R_E}\left[ n \right] &= {\log _2}\left( {1 + {\gamma _E}\left[ n \right]} \right). \label{eq:RE}
		\end{align}
	\end{subequations}
In the PE scenarios, the relationship between the achievable rate between the main and the eavesdropping links is crucial. When $R_E \ge R_D$, surveillance is successful, and $R_D$ is called ER. Otherwise, surveillance fails, and the ER is equal to zero. 	
To ensure successful eavesdropping, the condition ${R_E} \ge {R_D}$ must be satisfied, and $E$ is assumed to decode eavesdropping information with an arbitrarily small error in this work \cite{ZengY2016TSP}, \cite{XuD2022TWC}, \cite{Feizi2020TCOM}, \cite{LuH2019TVT}, \cite{JiangX2017SPL}
	
{ The horizontal flight power of the UAV is expressed as \cite{ZengY2019TWC}
\begin{align}\label{power_hor}
	{P_{\mathrm{hor}}}(\left\| {{{\mathbf{v}}_{xy}}} \right\|) =& {P_0}\left( {1 + \frac{{3{{\left\| {{{\mathbf{v}}_{xy}}} \right\|}^2}}}{{U_{\mathrm{tip}}^2}}} \right) +\frac{1}{2}{d_0}\rho sA{\left\| {{{\mathbf{v}}_{xy}}} \right\|^3} \nonumber\\+&	 {P_i}{\left( {\sqrt {1 + \frac{{{{\left\| {{{\mathbf{v}}_{xy}}} \right\|}^4}}}{{4v_0^4}}}  - \frac{{{{\left\| {{{\mathbf{v}}_{xy}}} \right\|}^2}}}{{2v_0^2}}} \right)^{1/2}},
\end{align}
where 
$\mathbf{v}_{xy}$ denote the horizontal velocity, $P_0$ and $P_i$ denote blade profile power and induced power, respectively, 
$U_{\mathrm{tip}}$ is the tip speed of the rotor blade,
$v_0$ is represented as the mean rotor induced velocity in hover, 
the fuselage drag ratio and rotor solidity are denoted as $d_0$ and $s$, respectively, 
and $\rho$ and $A$ represent the air density and rotor disc area, respectively.
The vertical flight power of the UAV is expressed as \cite{ FilipponeABook}
\begin{align}\label{power_ver}
		{P_{\mathrm{ver}}}({v_z}) = W{v_z},{v_z} > 0,
\end{align}
where 
$W$ denotes the weight of the UAV , $v_z$ denotes the vertical speed. When the vertical speed $v_z$ of the UAV is less than zero, $P_{\mathrm{ver}}=0$.
}

\section{Problem Formulation}
\label{sec:Problem1}
	
This work maximizes the AER of $E$ by jointly optimizing $E$'s horizontal trajectory ${\mathbf{q}}_E\left[ n \right]$ and transmission power ${P_E}\left[ n \right]$ of AN, while the condition for successful eavesdropping is ensured.
Let ${\mathbf{Q}} = \left\{ {{{\mathbf{q}}_E}\left[ n \right], \forall n} \right\}$ and  ${\mathbf{P}} = \left\{ {{P_E}\left[ n \right],\forall n} \right\}$, the AER maximization problem is formulated as
\begin{subequations}
	\begin{align}
		\mathcal{P}_{1} \,:\, &\max_{\mathbf{Q}, \mathbf{P}, \mathbf{v}_{xy}, \mathbf{a}_{xy}}  \Theta \\
		{\mathrm{s.t.}}\;&{R_D} \left[ n \right]\le {R_E}\left[ n \right],  	\label{p1b}\\
		& \mathbf{q}_E\left[ 0 \right] = {{\mathbf{q}}_{\mathrm{I}}}, \mathbf{q}_E\left[ N \right] = {{\mathbf{q}}_{\mathrm{F}}}, \label{p1c}\\
		& \mathbf{q}_E\left[ {n + 1} \right] = {{\mathbf{q}}_E}\left[ n \right] + \mathbf{v}_{xy}\left[ n \right]{\delta _t} + \frac{1}{2}\mathbf{a}_{xy}\left[ n \right]\delta _t^2, \label{p1d}\\
		&\mathbf{v}_{xy}\left[ 0 \right] =\mathbf{v}_{xy}\left[ N \right] \label{p1e}\\
		& \mathbf{v}_{xy}\left[ {n + 1} \right] = \mathbf{v}_{xy}\left[ n \right] + \mathbf{a}_{xy}\left[ n \right]{\delta _t}, \label{p1f}\\
		& \left\| \mathbf{v}_{xy}\left[ {n} \right] \right\|\leq v^{\max}_{xy},\left\| \mathbf{a}_{xy}\left[ {n} \right] \right\|\leq a^{\max}_{xy}, \label{p1g}\\
		& {\frac{1}{N}\sum\limits_{n = 1}^N {{P_{\mathrm{hor}}}(\left\| {{{\mathbf{v}}_{xy}}\left[ n \right]} \right\|} ) \le P_{\mathrm{hor}}^{\mathrm{ave}}, }\label{p1h}	\\
		& 0 \le {P_E}\left[ n \right] \le P_E^{\max }, \label{p1i}		
	\end{align}
\end{subequations}
where 
$\Theta  = \frac{1}{N}\sum\limits_{n = 0}^N {{R_D}\left[ n \right]} $ denotes the AER, 
${\mathbf{q}}_I$ and ${\mathbf{q}}_F$ signify the initial and final positions of $E$, respectively,
$\mathbf{v}_{xy}\left[ n \right]$ and $\mathbf{a}_{xy}\left[ n \right]$ denote the horizontal velocity and acceleration, respectively, 
$v^{\max}_{xy}$ denote the horizontal maximum velocity,
{{
		$P_{\mathrm{hor}}^{\mathrm{ave}}$ represents the average horizontal flight power of $E$.}
	(\ref{p1b}) represents the constraint of successful eavesdropping, 
	(\ref{p1c}) represents the constraint of horizontal location of the initial point and the final point of $E$, respectively, 
	(\ref{p1d})-(\ref{p1g}) are the constraints of the horizontal position, horizontal velocity, and horizontal acceleration of $E$, 
	(\ref{p1h}) represents the constraint of horizontal flight power, 
	(\ref{p1i}) represents the constraint of maximum jamming power, 
}
and $P_E^{\max }$ is the maximum transmit power of $E$.
	
It can be observed that $\mathcal{P}_{1}$ is difficult to obtain the solution directly due to the non-convex constraints (\ref{p1b}), (\ref{p1h}), and the strong coupling of $\mathbf {Q}$ and $\mathbf {P}$. 
Firstly, the left-hand side (LHS) of (\ref{p1b}), ${R_D}\left[ n \right]$ is a function of $\mathbf{Q}$ and $\mathbf{P}$ respectively. The concave and convex properties of the function of $\mathbf{Q}$ and $\mathbf{P}$ are opposite. Thus, it is a challenge to determine whether the LHS of (\ref{p1b}) satisfies the requirement of a convex function. 
Secondly, the right-hand side (RHS) of (\ref{p1b}), ${R_E}\left[ n \right]$, is a function about $\mathbf{Q}$ and is too complex to determine the concavity and convexity.  
{{Thirdly, the LHS of (\ref{p1h}) is complex and does not satisfy the convex constraint.  }}
Therefore, solving the original problem $\mathcal{P}_{1}$ is highly non-trivial.	
	
\section{Proposed Algorithm for Problem $\mathcal{P}_{1}$}
\label{sec:ProposedAlgorithm1}

To solve $\mathcal{P}_{1}$, alternating optimization method is utilized to optimize $\mathbf {Q}$ and $\mathbf {P}$ in an alternating manner, by considering the others to be given.
Specifically, with given the jamming power $\mathbf {P}$ of $E$, the horizontal flying trajectory $\mathbf {Q}$ of $E$ is optimized. 
For any given $E$'s trajectory $\mathbf {Q}$, the jamming power $\mathbf {P}$ of $E$ is optimized.

\subsection{Subproblem 1: Optimizing horizontal position}
\label{Subproblem01}

In this subsection, ${\mathbf{Q}}$ is optimized with given fixed variables ${{\mathbf{P}}}$. The problem $\mathcal{P}_{1}$ is reduced as
\begin{subequations}
	\begin{align}
		\mathcal{P}_{1.1} \,:\ & \mathop {\max }\limits_{\mathbf{Q},\mathbf{v}_{xy}, \mathbf{a}_{xy}} \Theta\\
		{\mathrm{s.t.}}\; & (\textrm{\ref{p1b}})-(\textrm{\ref{p1h}}). \notag 
	\end{align}
\end{subequations}

$\mathcal{P}_{1.1}$ is not a standard convex problem and cannot be solved by the convex optimization toolbox since 
${R_D}\left[ n \right]$ 
is not a convex function of ${\mathbf{Q}}$.
To deal with this problem, (\ref{p1b}) is removed in this subsection and solved by designing the jamming power in the following subsection.
Firstly, $R_E\left[ n \right]$ is rewritten as
\begin{align}\label{RE2}
	{R_E}\left[ n \right] = {R_{E1}}\left[ n \right] - {R_{E2}}\left[ n \right],
\end{align}
where
${R_{E1}}\left[ n \right] = {\log _2}\left(1 + \frac{{{\beta _0}{K^2}}}{{d_{RE}\left[ n \right]}} + {g_1}\left[ n \right]{\rho_S}\right)$,
${R_{E2}}\left[ n \right] $ $= {\log _2}\left(1 + \frac{{{\beta _0}{K^2}}}{{d_{RE}\left[ n \right]}}\right)$,
${g_1}\left[ n \right] = \sqrt {\frac{{4\beta _0^3{K^2}}}{{{d_{SE}\left[ n \right]}{d_{SR}}{d_{RE}\left[ n \right]}}}}+\frac{{\beta _0^2{K^2}}}{{{d_{SR}}{d_{RE}\left[ n \right]}}} +$ $\frac{{{\beta _0}}}{{d_{SE}\left[ n \right]}}$,
${d_{SE}}\left[ n \right]  = \left\| {{{\mathbf{q}}_E}\left[ n \right] - {{{\mathbf{q}}_S}}} \right\|^2+ z_E^2$,
${d_{ED}}\left[ n \right] = \left\| {{{\mathbf{q}}_E}\left[ n \right] - {{{\mathbf{q}}_D}}} \right\|^2+ $ $ z_E^2$,
${d_{RE}}\left[ n \right] = \left\| {{{\mathbf{q}}_E}\left[ n \right] - {{{\mathbf{q}}_R}}} \right\|^2+ {\left\|{{z_E}}-{z_R}\right\|}^2$, 
${d_{SR}}  = \left\| {{{\mathbf{q}}_R} - {{{\mathbf{q}}_S}}} \right\|^2+ z_R^2$, 
and
${\rho_S}=\frac{P_S}{\sigma^2}$.

{{
To transform the LHS of (\ref{p1h}) into a convex limit, we introduce the relaxation variable ${\upsilon _n} \ge \left\| {{{\mathbf{v}}_{xy}}\left[ n \right]} \right\|$.
By substituting the $\upsilon _n$ into (\ref{power_hor}), we obtain 
\begin{align}\label{power_hor1}
	{P_{\mathrm{hor}}}({\upsilon _n}) =& {P_0}\left( {1 + \frac{{3{\upsilon _n}^2}}{{U_{\mathrm{tip}}^2}}} \right) + {P_i}{\left( {\sqrt {1 + \frac{{{\upsilon _n}^4}}{{4v_0^4}}}  - \frac{{{\upsilon _n}^2}}{{2v_0^2}}} \right)^{1/2}} \nonumber\\&+ \frac{1}{2}{d_0}\rho sA{\upsilon _n}^3,
\end{align}
where the second term in (\ref{power_hor1}) is also a non-convex constraint. By introducing the relaxation variable $\tau_n$, we have 
\begin{align}
	\tau _n^2 &\ge \sqrt {1 + \frac{{{\upsilon _n}^4}}{{4v_0^4}}}  - \frac{{{\upsilon _n}^2}}{{2v_0^2}} \nonumber\\
	 \Leftrightarrow \frac{1}{{\tau _n^2}} &\le \tau _n^2 + \frac{{{\upsilon _n}^4}}{{v_0^2}}.
	\label{power_tau}
\end{align}			
The lower bound of the RHS of (\ref{power_tau}) is expressed as
 \begin{align}
 	\tau _n^2 + \frac{{{\upsilon _n}^2}}{{v_0^2}} &\ge 2\tau _n^{\left( j \right)}\left( {{\tau _n} - \tau _n^{\left( j \right)}} \right) + \frac{{{{\left( {\upsilon _n^{\left( j \right)}} \right)}^2}}}{{v_0^2}} \nonumber\\
 	&+ 2\frac{{\upsilon _n^{\left( j \right)}}}{{v_0^2}}\left( {{\upsilon _n} - \upsilon _n^{\left( j \right)}} \right) = \varphi _{\tau_n} ^{\mathrm{lb}},
\end{align}
where ${\left(  \right)} ^{\left( j \right)}$ denotes $j$-th iteration value.
Based on the above transformation, (\ref{p1h}) is rewritten as
\begin{align}
  	\frac{1}{N}\sum\limits_{n = 1}^N &{\left( {{P_0}\left( {1 + \frac{{3{\upsilon _n}^2}}{{U_{\mathrm{tip}}^2}}} \right) + {P_i}\tau _n  + \frac{1}{2}{d_0}\rho sA{\upsilon _n^3}} \right)}  \le P_{\mathrm{hor}}^{\mathrm{ave}},  \label{power_hor2}\\
  	&\frac{1}{{\tau _n^2}} \le \varphi _{\tau_n} ^{\mathrm{lb}}, \label{power_hor3}\\
  	&{\upsilon _n} \ge \left\| {{{\mathbf{v}}_{xy}}\left[ n \right]} \right\|. \label{power_hor4} 
\end{align}
}
}	  	
	Then, by introducing slack variables $\mathbf{S_1}=\left\{S_{1}\left[ n \right],S_{2}\left[ n \right]\},S_{3}\left[ n \right], \forall n \right\}$, 
	$\mathcal{P}_{1.1}$ is rewritten as 
	\begin{subequations}
		\begin{align}
			\mathcal{P}_{1.{\mathrm{1a}}} \,:\  	&\mathop {\max }\limits_{\mathbf{Q},\mathbf{S_1},\mathbf{v}_{xy}, \mathbf{a}_{xy}} {\Theta _1} \\
			{\mathrm{s.t.}}\; &S_{1}\left[ n \right] \le {R_D}\left[ n \right], \label{p12b}\\
			& {S_2}\left[ n \right] \le {R_{E1}}\left[ n \right], \label{p12c}\\
			&S_{3}\left[ n \right] \ge {R_{E2}}\left[ n \right], \label{p12d}\\
			&S_{1}\left[ n \right] \le S_{2}\left[ n \right] - S_{3}\left[ n \right], \label{p12e}\\
			&(\textrm{\ref{p1c}})-(\textrm{\ref{p1g}}), (\textrm{\ref{power_hor2}})-(\textrm{\ref{power_hor4}})\notag 
		\end{align}
	\end{subequations}
	where 
	${\Theta _1} = \frac{1}{N}\sum\limits_{n = 0}^N {{S_1}} \left[ n \right]$. 
	However, the expressions of (\ref{p12b}), (\ref{p12c}) and (\ref{p12d})  are still nonconvex constraints on $\mathbf{q}_E\left[ n \right]$. 
	By introducing slack variables $\mathbf{d}=\left\{{\hat d_{SE}}\left[ n \right],{\hat d_{RE}}\left[ n \right], {\hat d_{ED}}\left[ n \right], \forall n \right\}$, $\mathcal{P}_{1.{\mathrm{1a}}}$ is rewritten as
	\begin{subequations}
		\begin{align}
			\mathcal{P}_{1.{\mathrm{1b}}} \,:\  	&\mathop {\max }\limits_{\mathbf{Q},\mathbf{S_1},\mathbf{d},\mathbf{v}_{xy}, \mathbf{a}_{xy}} {\Theta _1} \\
			{\mathrm{s.t.}}\; &S_{1}\left[ n \right] \le {\hat R_D}\left[ n \right], \label{p13b}\\
			& {S_2}\left[ n \right] \le {{\hat R}_{E1}}\left[ n \right], \label{p13c}\\
			&S_{3}\left[ n \right] \ge {{\hat R}_{E2}}\left[ n \right], \label{p13d}\\
			&S_{1}\left[ n \right] \le S_{2}\left[ n \right] - S_{3}\left[ n \right], \label{p13e}\\
			&{\hat d_{SE}}\left[ n \right] \ge {d_{SE}}\left[ n \right], \label{p13f}\\
			&{\hat d_{RE}}\left[ n \right] \ge {d_{RE}}\left[ n \right], \label{p13g}\\
			&{\hat d_{ED}}\left[ n \right] \le {d_{ED}}\left[ n \right], \label{p13h}\\
			& (\textrm{\ref{p1c}}) - (\textrm{\ref{p1g}}), (\textrm{\ref{power_hor2}})-(\textrm{\ref{power_hor4}}), \notag 
		\end{align}
	\end{subequations}
	where 
	${\hat R_D}\left[ n \right] = {\log _2}\left(1 + \frac{{{K^2}h_{SR}^2 h_{RD}^2{P_S}}}{{(1+{K^2}h_{RD}^2){\sigma ^2} + \frac{{{P_E}\left[ n \right]{\beta _0}}}{{{{\hat d_{ED}}}\left[ n \right]}}}}\right)$	
	${{\hat R}_{E1}}\left[ n \right] = {\log _2}\left( {1 + {\rho _S}\sqrt {\frac{{4\beta _0^3{K^2}}}{{{{\hat d}_{SE}}\left[ n \right]{d_{SR}}{{\hat d}_{RE}}\left[ n \right]}}}  + \frac{{{\beta _0}}}{{{{\hat d}_{SE}}\left[ n \right]}}} \right.$ $\left. { + \frac{{\beta _0^2{K^2}}}{{{d_{SR}}{{\hat d}_{RE}}\left[ n \right]}} + \frac{{{\beta _0}{K^2}}}{{{{\hat d}_{RE}}\left[ n \right]}}} \right)$, 
	and 
	${{\hat R}_{E2}}\left[ n \right] =  {\log _2}\left(1 + \frac{{{\beta _0}{K^2}}}{{{{\hat d_{RE}}}\left[ n \right]}}\right)$.

	To determine the concavity and convexity of the RHS of (\ref{p13c}), 
	\textit{Lemma 1} is given as following.
	
	\begin{lemma}
		For given $c_1 > 0, c_2 > 0,c_3 > 0,c_4 > 0$, $f = {\log _2}\left({c_1} + {c_2}{x^{ - 1}} + {c_3}{y^{ - 1}} + {c_4}{x^{ - \frac{1}{2}}}{y^{ - \frac{1}{2}}}\right)$ is a convex function.
	\end{lemma}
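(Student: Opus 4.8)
The plan is to prove convexity of $f(x,y)={\log_2}\left(c_1+c_2 x^{-1}+c_3 y^{-1}+c_4 x^{-1/2}y^{-1/2}\right)$ on the positive orthant by exhibiting it as a composition of a concave-nondecreasing outer function with... no, wait --- $\log$ is concave, so the usual composition rule gives concavity, not convexity. So the plan is instead to work with the substitution $x=e^{u}$, $y=e^{v}$ and show that $g(u,v)=\ln\!\left(c_1+c_2 e^{-u}+c_3 e^{-v}+c_4 e^{-u/2-v/2}\right)$ is convex in $(u,v)$; since the map $(x,y)\mapsto(\ln x,\ln y)$ is a diffeomorphism, convexity of $g$ does not by itself give convexity of $f$ in $(x,y)$, so I would rather keep the original variables but use the log-sum-exp / geometric-mean structure directly. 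The cleanest route: write the argument of the logarithm as $c_1+c_2 x^{-1}+c_3 y^{-1}+c_4 x^{-1/2}y^{-1/2}$ and observe each term is of the form $a\cdot x^{-p}y^{-q}$ with $p,q\ge 0$, i.e. a \emph{posynomial} in $(x,y)$; the logarithm of a posynomial is well known (from geometric programming) to be convex \emph{after} the change of variables $x=e^{u},y=e^{v}$, and then I need a separate argument bridging back to $(x,y)$-convexity.

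Concretely, here is the route I would actually carry out. First I reduce to the two-variable case $y=x$ is not enough; instead I note it suffices to prove that for \emph{fixed} $x$ the function is convex in $y$, for fixed $y$ convex in $x$, and that the Hessian is positive semidefinite --- so I would just compute the $2\times2$ Hessian of $f$ directly. Let $D(x,y)=c_1+c_2 x^{-1}+c_3 y^{-1}+c_4 x^{-1/2}y^{-1/2}$ (all terms positive, so $D>0$). Then $f=\frac{1}{\ln 2}\ln D$, and $\nabla^2 f=\frac{1}{\ln 2}\left(\frac{\nabla^2 D}{D}-\frac{\nabla D\,\nabla D^{\mathsf T}}{D^2}\right)$. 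The second piece is negative semidefinite, so the obstacle is to show $D\,\nabla^2 D\succeq \nabla D\,\nabla D^{\mathsf T}$, i.e. a Cauchy--Schwarz-type inequality among the terms. I would split $D=\sum_{k} t_k$ where $t_1=c_1$, $t_2=c_2x^{-1}$, $t_3=c_3y^{-1}$, $t_4=c_4x^{-1/2}y^{-1/2}$, each $t_k=a_k x^{-p_k}y^{-q_k}$ with $(p_k,q_k)$ equal to $(0,0),(1,0),(0,1),(1/2,1/2)$ respectively and $a_k>0$. For a single monomial term $t=a x^{-p}y^{-q}$ one checks $t\,\nabla^2 t-\nabla t\,\nabla t^{\mathsf T}= t^2\begin{psmallmatrix}p/x^2 & 0\\0 & q/y^2\end{psmallmatrix}\succeq 0$ --- wait, $\nabla^2(\ln t)$ is actually zero for a monomial in log-variables; in $(x,y)$ variables the relevant identity is $\nabla^2 t=\tfrac{1}{t}\nabla t\,\nabla t^{\mathsf T}+t\,\mathrm{diag}(p/x^2,q/y^2)$, which I would verify by direct differentiation.

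Then for the sum, using $D=\sum_k t_k$, $\nabla D=\sum_k \nabla t_k$, $\nabla^2 D=\sum_k\nabla^2 t_k$, I get
\[
D\,\nabla^2 D-\nabla D\,\nabla D^{\mathsf T}=\Big(\sum_k t_k\Big)\Big(\sum_k \nabla^2 t_k\Big)-\Big(\sum_k\nabla t_k\Big)\Big(\sum_k\nabla t_k\Big)^{\mathsf T}.
\]
Substituting the monomial identity, the ``diagonal'' contributions $\sum_k t_k\cdot t_k\,\mathrm{diag}(p_k/x^2,q_k/y^2)$ times $D$ are manifestly PSD, and the remaining part is $\big(\sum_k t_k\big)\big(\sum_k t_k^{-1}\nabla t_k\,\nabla t_k^{\mathsf T}\big)-\big(\sum_k\nabla t_k\big)\big(\sum_k\nabla t_k\big)^{\mathsf T}$, which is PSD by the Cauchy--Schwarz inequality in the form $\big(\sum \lambda_k\big)\big(\sum \lambda_k^{-1} w_kw_k^{\mathsf T}\big)\succeq \big(\sum w_k\big)\big(\sum w_k\big)^{\mathsf T}$ with $\lambda_k=t_k>0$ and $w_k=\nabla t_k$ (this is just the matrix/vector form of Cauchy--Schwarz, equivalently convexity of $w\mapsto w w^{\mathsf T}/\lambda$). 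Combining, $\nabla^2 f\succeq 0$ on the positive orthant, so $f$ is convex.

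The main obstacle I anticipate is bookkeeping in the last step: making the Cauchy--Schwarz argument at the level of rank-one matrices fully rigorous (showing $\sum_k t_k\sum_k t_k^{-1} w_k w_k^{\mathsf T}\succeq (\sum_k w_k)(\sum_k w_k)^{\mathsf T}$ for arbitrary vectors $w_k$ and positive scalars $t_k$) and correctly tracking that the factor $c_1$ term has $w_1=\nabla t_1=0$, so it contributes only to $\sum t_k$ and strengthens the inequality. An alternative, possibly cleaner write-up would skip Hessians entirely: show each $t_k(x,y)$ is log-convex jointly (a power function $x^{-p}y^{-q}$ with $p,q\ge0$ is log-convex since $-p\ln x-q\ln y$ is convex... but $\ln x$ is concave, so $-p\ln x$ is convex --- good), invoke that a sum of log-convex functions is log-convex, and conclude $\ln D$ is convex hence $f$ is convex. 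I would present the Hessian computation as the primary proof and remark on the log-convexity shortcut, since the latter hides the one subtlety worth stating (that $p,q\ge 0$ is exactly what makes $x^{-p}y^{-q}$ log-convex in $(x,y)$, and here the exponents $0,1,\tfrac12$ are all nonnegative).
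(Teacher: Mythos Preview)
Your proposal is correct, and both routes you sketch---the monomial-decomposition Cauchy--Schwarz argument for $D\,\nabla^{2}D\succeq\nabla D\,\nabla D^{\mathsf T}$ and the log-convexity shortcut---yield a valid proof. The paper takes a different, purely computational route: it writes out the four second-order partials $\partial^{2}f/\partial x^{2}$, $\partial^{2}f/\partial y^{2}$, $\partial^{2}f/\partial x\partial y$ explicitly as sums of signed monomials, notes that $\partial^{2}f/\partial x^{2}>0$ term by term, and then asserts (without expanding the product) that the $2\times2$ Hessian determinant is positive, invoking Sylvester's criterion. Your argument is more conceptual and more general: it works verbatim for any finite sum $\sum_{k}a_{k}x^{-p_{k}}y^{-q_{k}}$ with $a_{k}>0$ and $p_{k},q_{k}\ge0$, and it makes transparent \emph{why} the result holds---each term is log-convex in $(x,y)$ because $-p_{k}\ln x-q_{k}\ln y$ is convex when $p_{k},q_{k}\ge0$, and sums of log-convex functions are log-convex. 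The paper's computation is tied to the specific four-term form and leaves the crucial determinant inequality unverified on the page. If you write this up, I would lead with the log-convexity argument (it is essentially two lines and needs no Hessian algebra) and mention the Cauchy--Schwarz Hessian decomposition only as a remark, rather than the other way around.
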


	\begin{proof}
		See Appendix \ref{sec:appendicesA}.
	\end{proof}
	
	Based on \textit{Lemma 1}, ${{\hat R}_{E1}}\left[ n \right]$ is a convex function of ${\hat d_{SE}}$ and ${\hat d_{RE}}$, which does not satisfy the convex constraint. By utilizing the SCA technology, the lower bound of ${{\hat R}_{E1}}\left[ n \right]$ can be recast as 
	\begin{align}\label{hatRE1}
		&{{\hat R}_{E1}}\left[ n \right] \ge {\log _2}{A_1}\left[ n \right] + \frac{{{B_1}\left[ n \right]}}{{\ln 2{A_1}\left[ n \right]}}\left({\hat d_{RE}}\left[ n \right]-\hat d_{RE}^{\left( j \right)}\left[ n \right]\right) \nonumber\\
		&\,\, + \frac{{{C_1}\left[ n \right]}}{{\ln 2{A_1}\left[ n \right]}}\left({\hat d_{SE}}\left[ n \right]-\hat d^{\left( j \right)}_{SE}\left[ n \right]\right) \nonumber\\
		&\,\, \buildrel \Delta \over = \varphi _{RE1}^{\mathrm{lb}}\left[ n \right],
	\end{align}
	where 
	${A_1}\left[ n \right] = 1 + \frac{{{\beta _0}{K^2}}}{{\hat d_{RE}^{\left( j \right)}\left[ n \right]}} + \frac{{{\rho _S}{\beta _0}}}{{\hat d_{SE}^{\left( j \right)}\left[ n \right]}} + \frac{{{\rho _S}\beta _0^2{K^2}}}{{{d_{SR}}\hat d_{RE}^{\left( j \right)}\left[ n \right]}} + {\rho _S}\sqrt {\frac{{4\beta _0^3{K^2}}}{{{d_{SR}}\hat d_{SE}^{\left( j \right)}\left[ n \right]\hat d_{RE}^{\left( j \right)}\left[ n \right]}}} $, 
	${B_1}\left[ n \right] =  - \frac{{{\beta _0}{K^2}}}{{{{\left( {\hat d_{RE}^{\left( j \right)}\left[ n \right]} \right)}^2}}} - \frac{{\beta _0^2{K^2}{\rho _S}}}{{{d_{SR}}{{\left( {\hat d_{RE}^{\left( j \right)}\left[ n \right]} \right)}^2}}} -$ $ {\rho _S}\sqrt {\frac{{4\beta _0^3{K^2}}}{{\hat d_{SE}^{\left( j \right)}\left[ n \right]{d_{SR}}{{\left( {\hat d_{RE}^{\left( j \right)}\left[ n \right]} \right)}^3}}}} $, 
	${C_1}\left[ n \right] =  - \frac{{{\beta _0}{\rho _S}}}{{{{\left( {\hat d_{SE}^{\left( j \right)}\left[ n \right]} \right)}^2}}} - {\rho _S}\sqrt {\frac{{4\beta _0^3{K^2}}}{{{d_{SR}}\hat d_{RE}^{\left( j \right)}\left[ n \right]{{\left( {\hat d_{SE}^{\left( j \right)}\left[ n \right]} \right)}^2}}}}  $, 
	${\hat d_{RE}}^{\left( j \right)}$ and ${\hat d_{SE}}^{\left( j \right)}$ represent the values of the $j$th iteration of  ${\hat d_{RE}}$ and ${\hat d_{SE}}$, respectively. 
	
	Similarly, the lower bound of the RHS of (\ref{p13h}) is obtained as
	\begin{align}\label{RHSof13}
		&{\left\| {{{\mathbf{q}}_E}\left[ n \right] - {{\mathbf{q}}_D}} \right\|^2} +  z_E^2 \ge {\left\| {\mathbf{q}^{\left( j \right)}\left[ n \right] - {{\mathbf{q}}_D}} \right\|^2} + z_E^2 \nonumber\\
		&\;\;\;\;\;\;  + 2\left\| {{\mathbf{q}^{\left( j \right)}}\left[ n \right] - {{\mathbf{q}}_D}} \right\|\left({{\mathbf{q}}_E}\left[ n \right] - {\mathbf{q}^{\left( j \right)}}\left[ n \right]\right) \nonumber\\
		&\;\;\;\;\;\;  \buildrel \Delta \over = {\varphi ^\mathrm{lb}_{ued}}\left[ n \right],
	\end{align}
	where $\mathbf{q}^{\left( j \right)}\left[ n \right]$ denotes the value of the $j$th iteration of ${{\mathbf{q}}_E}\left[ n \right]$. 
	
	To facilitate the solution, ${\hat R_D}\left[ n \right]$ and ${{\hat R}_{E2}}\left[ n \right]$ are also relaxed as 
	\begin{align}\label{hatRD}
		{\hat R_D}\left[ n \right] &\le {\log _2}{A_0}\left[ n \right] + \frac{{{B_0}\left[ n \right]}}{{\ln 2{A_0}\left[ n \right]}}\left({{\hat d_{ED}}}\left[ n \right] - {\hat d_{ED}}^{\left( j \right)}\left[ n \right]\right) \nonumber\\
		&\buildrel \Delta \over =  \varphi _{RD}^{\mathrm{ub}}\left[ n \right]
	\end{align}
	and
	\begin{align}\label{hatRE2}
		{{\hat R}_{E2}}\left[ n \right] &\ge \frac{{{B_2}\left[ n \right]}}{{\ln 2{A_2}\left[ n \right]}}\left({\hat d_{RE}}\left[ n \right]-\hat d_{RE}^{\left( j \right)}\left[ n \right]\right)  +  {\log _2}{A_2}\left[ n \right] \nonumber\\
		&\buildrel \Delta \over = \varphi _{RE2}^\mathrm{lb}\left[ n \right],
	\end{align}
	where 
	${A_0}\left[ n \right] = 1 + \frac{{{K^2}h_{SR}^2h_{RD}^2{P_S}}}{{\left(1+{K^2}h_{RD}^2\right){\sigma ^2} + \frac{{{P_E}\left[ n \right]{\beta _0}}}{{{\hat d_{ED}}^{\left( j \right)}\left[ n \right]}}}}$,
	${B_0}\left[ n \right] = \frac{{{K^2}h_{SR}^2h_{RD}^2{P_S}\frac{{{P_E}\left[ n \right]{\beta _0}}}{{{{({\hat d_{ED}}^{\left( j \right)}\left[ n \right])}^2}}}}}{{{{\left(\left(1 + {K^2}h_{RD}^2\right){\sigma ^2} + \frac{{{P_E}\left[ n \right]{\beta _0}}}{{{\hat d_{ED}}^{\left( j \right)}\left[ n \right]}}\right)}^2}}}$,
	${A_2}\left[ n \right] =1+ \frac{{{\beta _0}{K^2}}}{{{\hat d_{RE}^{\left( j \right)}}\left[ n \right]}}$,
	${B_2}\left[ n \right] =  - \frac{{{\beta _0}{K^2}}}{{{{\left({\hat d_{RE}^{\left( j \right)}}\left[ n \right]\right)}^2}}}$,
	${\hat d_{ED}}^{\left( j \right)}$ denotes the value of the $j$th iteration of ${\hat d_{ED}}$, 
	and 
	the superscript `ub' signifies the upper bound.
	
	Then, $\mathcal{P}_{1.{\mathrm{1b}}}$ is reformulated as
	\begin{subequations}
		\begin{align}
			\mathcal{P}_{1.{\mathrm{1c}}} \,:\  	& \mathop {\max }\limits_{\mathbf{Q},\mathbf{S_1},\mathbf{d},\mathbf{v}_{xy}, \mathbf{a}_{xy}} {\Theta _1}  \label{p11ca}\\
			{\mathrm{s.t.}}\; & S_{1}\left[ n \right] \le \varphi _{RD}^\mathrm{ub}\left[ n \right], \label{p11cb}\\
			&S_{2}\left[ n \right] \le \varphi _{RE1}^\mathrm{lb}\left[ n \right],\label{p11cc}\\
			&S_{3}\left[ n \right] \ge \varphi _{RE2}^\mathrm{lb}\left[ n \right],\label{p11cd}\\
			&S_{1}\left[ n \right] \le S_{2}\left[ n \right] - S_{3}\left[ n \right], \label{p11ce}\\
			&{\hat d_{SE}}\left[ n \right] \ge {d_{SE}}\left[ n \right], \label{p11cf}\\
			&{\hat d_{RE}}\left[ n \right] \ge {d_{RE}}\left[ n \right], \label{p11cg}\\
			&{{\hat d_{ED}}}\left[ n \right] \le {\varphi ^\mathrm{lb}_{ued}}\left[ n \right],\label{p11ch}\\
			& (\textrm{\ref{p1c}})- (\textrm{\ref{p1g}}), (\textrm{\ref{power_hor2}})-(\textrm{\ref{power_hor4}}). \notag 
		\end{align}
	\end{subequations}
	$\mathcal{P}_{1.{\mathrm{1c}}}$ is a standard convex problem that can be efficiently solved by existing standard convex optimization tools such as CVX.

	\subsection{Subproblem 2: Optimizing Jamming Power} 
	\label{Subproblem02}

	With the given $\mathbf{Q}$, subproblem $\mathcal{P}_{1.2}$ is formulated as 
	\begin{small}
		\begin{subequations}
			\begin{align}
				\mathcal{P}_{1.2} \,:\	&\mathop {\max }\limits_{\mathbf{P}} \Theta  \label{P20a}\\
				{\mathrm{s.t.}}\; &{R_D}\left[ n \right] \le {R_E}\left[n\right], \label{P20b}\\
				& (\textrm{\ref{p1i}}). \notag 
			\end{align}
		\end{subequations}
	\end{small}
	
	$\mathcal{P}_{1.2}$ is not a standard convex problem because $R_D$ is a convex function of $P_E$ in (\ref{P20a}). 
	The lower bound of $R_D$ is denoted as
	\begin{align}\label{LBRD}
		R_D\left[ n \right]	&\ge {\log _2}A_2\left[ n \right] + \frac{{B_2\left[ n \right]}}{{\ln 2A_2\left[ n \right]}}({P_E}\left[ n \right] - P_E^{\left( j \right)}\left[ n \right]) \nonumber\\
		&\buildrel \Delta \over = \varphi _D^\mathrm{lb}\left[ n \right],
	\end{align}
	where  $A_2\left[ n \right] = 1 + \frac{{{K^2}h_{SR}^2 h_{RD}^2 {P_S}}}{{\left(1+{K^2}h_{RD}^2\right){\sigma ^2} + {P_E^{\left( j \right)}}\left[ n \right]h_{ED}^2\left[ n \right]}}$, $B_2\left[ n \right] =  - \frac{{{K^2}h_{SR}^2 h_{RD}^2 h_{ED}^2\left[ n \right]{P_S}}}{{{{\left( {(1+{K^2}h_{RD}^2){\sigma ^2} + {P_E^{\left( j \right)}}\left[ n \right]h_{ED}^2\left[ n \right]} \right)}^2}}}$, and 
	$P_E^{\left( j \right)}\left[ n \right]$ denotes the $j$th iteration of $P_E\left[ n \right]$.
	
	Although (\ref{P20b}) satisfies the convex constraint, to facilitate the operation of the solution in the CVX toolbox, $R_{D}\left[ n \right]$ is expressed as
	\begin{align}\label{RD02}
		R_D\left[ n \right] = R_{D1}\left[ n \right] - R_{D2}\left[ n \right],
	\end{align}
	where 
	$R_{D1}\left[ n \right] = {\log _2}\left((1+{K^2}h_{RD}^2){\sigma ^2} + {P_E}\left[ n \right]h_{ED}^2\left[ n \right] +\right.$ $\left.{K^2}h_{SR}^2h_{RD}^2{P_S} \right)$ 
	and 
	$R_{D2}\left[ n \right] = {\log _2}\left((1+{K^2}h_{RD}^2){\sigma ^2} + \right.$ $\left.{P_E}\left[ n \right]h_{ED}^2\left[ n \right]\right)$.  
	$R_{D1}\left[ n \right]$ is concave functions of $P_E$ and its upper bound is obtained as
	\begin{align}\label{UBRD1}
		R_{D1}\left[ n \right] &\le	 {\log_2}A_3\left[ n \right] + \frac{{B_3\left[ n \right]}}{{\ln 2A_3\left[ n \right]}}({P_E}\left[ n \right] - P_E^{\left( j \right)}\left[ n \right]) \nonumber\\
		&= \varphi _{D1}^\mathrm{ub}\left[ n \right],
	\end{align}
	where 
	$A_3\left[ n \right]=(1+{K^2}h_{RD}^2){\sigma ^2} + {P_E^{\left( j \right)}}\left[ n \right]h_{ED}^2\left[ n \right] +{K^2}h_{SR}^2h_{RD}^2{P_S}$
	and 
	$B_3\left[ n \right] = h_{ED}^2\left[ n \right]$.

	$\mathcal{P}_{1.{\mathrm{2}}}$ can be rewritten as
	\begin{subequations}
		\begin{align}
			\mathcal{P}_{1.{\mathrm{2a}}} \,:\	& \mathop {\max }\limits_{\mathbf{P}}  {\Theta ^{{\mathrm{lb}}}}\\
			{\mathrm{s.t.}}\;& \varphi _{D1}^\mathrm{ub}\left[ n \right] - R_{D2}\left[ n \right] \le {R_E}\left[n\right], \label{p12ab}\\
			& (\textrm{\ref{p1i}}), \notag 
		\end{align}
	\end{subequations}
	where 
	${\Theta ^{{\mathrm{lb}}}} = \frac{1}{N}\sum\limits_{n = 0}^N {\varphi _D^{{\mathrm{lb}}}\left[ n \right]} $. 
	$\mathcal{P}_{1.{\mathrm{2a}}}$ is a standard convex problem and can be solved with existing optimization tools.
	
	\textbf{Algorithm 1} summarizes the details of overall iterations for $\mathcal{P}_{1}$. 
	
	\begin{algorithm}[t]
		\caption{An iterative algorithm for joint optimization of 2D trajectory and jamming power}
		\label{algorithm1}
		\begin{algorithmic}[1]  
			\STATE \textbf{Input:} ${{\mathbf{q}}_S}$, ${{\mathbf{q}}_D}$, $\mathbf{q}_I$, $\mathbf{q}_F$, $a^{\max}_{xy}$, $v^{\max}_{xy}$, $H$, $T$
			\STATE \textbf{Output:} $\mathbf{Q}$, $\mathbf{P}$
			\STATE  Initialization: set initial variables; tolerance $\varepsilon > 0$ and iteration number $j=0$.
			\STATE \textbf{Repeat}    
			\STATE With given $\mathbf{P}^{{\left( {j} \right)}}$, obtain $\mathbf{Q}^{{\left( {j + 1} \right)}}$, $\mathbf{S_1}^{{\left( {j + 1} \right)}}$, $\mathbf{d}^{{\left( {j + 1} \right)}}$ by solving problem ($\mathcal{P}_{1.{\mathrm{1b}}}$)
			\STATE With given $\mathbf{Q}^{{\left( {j + 1} \right)}}$, $\mathbf{S_1}^{{\left( {j + 1} \right)}}$, $\mathbf{d}^{{\left( {j + 1} \right)}}$, obtain $\mathbf{P}^{{\left( {j + 1} \right)}}$ by solving problem ($\mathcal{P}_{1.{\mathrm{2a}}}$)
			\STATE $j=j+1$
			\STATE \textbf{Until} {$\left| {{\Theta ^{\left( {j + 1} \right)}} - {\Theta ^{\left( j \right)}}} \right| \le \varepsilon $ or reaches the maximum number of iterations}
			\STATE Obtain the optimal solution
		\end{algorithmic} 
	\end{algorithm}
	
	\section{Joint Optimization 3D Trajectory and Jamming Power of $E$} 
	\label{sec:Problem2}
	
	In this section, it is considered that $E$ flies at a varying altitude $z_E$ and the 3D trajectory and jamming power of $E$ are jointly optimized.
	Let ${\mathbf{Z}} = \left\{ {{{z_E}\left[ n \right]},\forall n} \right\}$, the optimization problem is formulated as
	\begin{subequations}
		\begin{align}
			\mathcal{P}_{2} \,:\, &\max_{\mathbf{Q}, \mathbf{P},{\mathbf{Z},\mathbf{v}_{xy}, \mathbf{a}_{xy},{v}_{z}, {a}_{z}}} \Theta  \\
		{\mathrm{s.t.}}\; & {{z_E}\left[ 0 \right]} = {H_{\mathrm{I}}}, {{z_E}\left[ N \right]} = {H_{\mathrm{F}}}, \label{p2b}\\
		& {z_E}\left[ {n + 1} \right] = {{z_E}\left[ n \right]} + v_{z}\left[ n \right]{\delta _t} + \frac{1}{2}a_{z}\left[ n \right]\delta _t^2, \label{p2c}\\
		& {{z_E}\left[ n \right]} \leq H_{\max},{{z_E}\left[ n \right]} \geq H_{\min}, \label{p2d} \\
		& {v}_{z}\left[ {n + 1} \right] = {v}_{z}\left[ n \right] + {a}_{z}\left[ n \right]{\delta _t}, n=0,1,...,N-1, \label{p2e}\\
		& {{v_z}\left[ 0 \right]}  = {{v_z}\left[ N \right]} \label{p2f}\\
		& \left| {{v_z}\left[ n \right]} \right| \le v_z^{\max },\left| {{a_z}\left[ n \right]} \right| \le a_z^{\max }, \label{p2g} \\
	    & {\frac{1}{N}\sum\limits_{n = 1}^N {{P_{\mathrm{ver}}}(\left| {{{{v}}_{z}}\left[ n \right]} \right|} ) \le P_{\mathrm{ver}}^{\mathrm{ave}},\label{p2h}}	\\
		& (\textrm{\ref{p1b}})-(\textrm{\ref{p1i}}), \notag 
		\end{align}
	\end{subequations}
where 
$H_{\max}$ and $H_{\min}$ are the vertical maximum and minimum position of $E$, respectively, 
$v_z\left[ n \right]$ and $a_z\left[ n \right]$ signify the vertical velocity and acceleration, respectively, 
$v^{\max}_{z}$ and $a^{\max}_{z}$ denote the vertical maximum velocity and acceleration of $E$, respectively, 
{{
$P_{\mathrm{ver}}^{\mathrm{ave}}$  represents the average vertical flight power.
	(\ref{p2b}) represents the constraint of vertical location of the initial point and the final point of $E$, respectively, 
	(\ref{p2c})-(\ref{p2g}) are the constraints of the vertical position, vertical velocity, and vertical acceleration of $E$, 
	(\ref{p2h}) represents the constraint of vertical flight power. 
	}}	

$\mathcal{P}_{2}$ is also difficult to solved due to the non-convex constraints (\ref{p1b}) which is coupled by the optimization variable $\mathbf {Q}$, $\mathbf {P}$ and $\mathbf {Z}$.  
To solve $\mathcal{P}_{2}$, the horizontal position of $E$ (${{\mathbf{Q }}}$), vertical trajectory of $E$ (${{\mathbf{H }}}$), and jamming power (${{\mathbf{P}}}$) are iteratively optimized alternately to obtain the maximum AER. 
One can find, optimizing ${{\mathbf{Q }}}$ and ${{\mathbf{P}}}$ can be solved in $\mathcal{P}_{1.{\mathrm{1b}}}$ and in $\mathcal{P}_{1.{\mathrm{2}}}$, respectively.
Thus, the subproblem of optimizing the vertical trajectory of $E$ is given as follows.
	
\subsection{Subproblem 3: Optimizing Vertical Trajectory}
\label{Subproblem3}

Given the horizontal trajectory  $\mathbf{Q}$ and jamming power $\mathbf{P}$, $\mathcal{P}_{2}$ is rewritten as
\begin{subequations}
	\begin{align}
		\mathcal{P}_{2.1} \,:\, &\max_{\mathbf{Q}, \mathbf{P},{\mathbf{Z},{v}_{z}, {a}_{z}}} \Theta _2 \\
		{\mathrm{s.t.}}\; & (\textrm{\ref{p2b}})-(\textrm{\ref{p2h}}), (\textrm{\ref{p1b}}). \notag 
	\end{align}
\end{subequations}
where 
${\Theta _2} = \frac{1}{N}\sum\limits_{n = 0}^N {{S_4}\left[ n \right]} $. 
	
$\mathcal{P}_{2.1}$ cannot be solved directly because (\ref{p1b}) is a non-convex constraint. 
For the convenience of solving, (\ref{p1b}) in $\mathcal{P}_{2.1}$ is also replaced by introducing slack variable  $\mathbf{Z_1}=\left\{z_h\left[ n \right],z_{ER}\left[ n \right], \forall n \right\}$, $\mathbf{S_2}=\left\{S_{4}\left[ n \right],S_{5}\left[ n \right],S_{6}\left[ n \right], \forall n \right\}$, and $\mathcal{P}_{2.1}$ is rewritten as
\begin{subequations}
	\begin{align}	
		\mathcal{P}_{2.2} \,:\  &	\mathop {\max }\limits_{{\mathbf{Z}},\mathbf{S_2},\mathbf{Z_1},{v}_{z}, {a}_{z}} {\Theta _2} \\
		{\mathrm{s.t.}}\; &S_{4}\left[ n \right] \le {{\tilde R}_D}\left[ n \right], \label{p22b}\\
		&S_{5}\left[ n \right] \le {{\tilde R}_{E1}}\left[ n \right], \label{p22c}\\
		&S_{6}\left[ n \right] \ge {{\tilde R}_{E2}}\left[ n \right],\label{p22d}\\
		&S_{4}\left[ n \right] \le S_{5}\left[ n \right] - S_{6}\left[ n \right],\label{p22e}\\
		&{z_h}\left[ n \right] \le z_E^2\left[ n \right],\label{p22f}\\
		&{z_{ER}}\left[ n \right] \ge {\left({{z_E}\left[ n \right]} - {z_R}\right)^2},\label{p22g}\\
		& (\textrm{\ref{p2b}})-(\textrm{\ref{p2h}}), \notag 
	\end{align}
\end{subequations}
where 
${{\tilde R}_D}\left[ n \right] = {\log _2}\left(1 + \frac{{{K^2}h_{SR}^2h_{RD}^2{P_S}}}{{(1+{K^2}h_{RD}^2){\sigma ^2} + \frac{{{P_E}\left[ n \right]{\beta _0}}}{{{{\tilde d}_{ED}}}\left[ n \right]}}}\right)$,
${{\tilde R}_{E1}}\left[ n \right] = {\log _2}\left( {1 + {\rho _S}\sqrt {\frac{{4\beta _0^3{K^2}}}{{{d_{SR}}{{\tilde d}_{SE}}\left[ n \right]{{\tilde d}_{RE}}\left[ n \right]}}}  + \frac{{{\beta _0}{K^2}}}{{{{\tilde d}_{RE}}\left[ n \right]}} + \frac{{{\rho _S}{\beta _0}}}{{{{\tilde d}_{SE}}\left[ n \right]}}} \right.$ $\left. { + \frac{{{\rho _S}\beta _0^2{K^2}}}{{{d_{SR}}{{\tilde d}_{RE}}\left[ n \right]}}} \right)$, 
${{\tilde R}_{E2}}\left[ n \right] = {\log _2}\left(1 + \frac{{{\beta _0}{K^2}}}{{{{\tilde d}_{RE}\left[ n \right]}}}\right)$,
${{{\tilde d}_{ED}}}\left[ n \right]=\left\| {{\mathbf{q}}_E}\left[ n \right] -{{\mathbf{q}}_D}\right\|^2 + {z_h}\left[ n \right]$, 
${{{\tilde d}_{SE}}}\left[ n \right]=\left\| {{\mathbf{q}}_E}\left[ n \right] -{{\mathbf{q}}_S}\right\|^2 + {z_h}\left[ n \right]$, 
and 
${{{\tilde d}_{RE}}}\left[ n \right]=\left\| {{\mathbf{q}}_E}\left[ n \right] -{{\mathbf{q}}_R}\right\|^2 + {z_{ER}}\left[ n \right]$.

	$\mathcal{P}_{2.2}$ cannot be solved by CVX because (\ref{p22c}) and (\ref{p22f}) are not convex.
	Similarly, it can be found that ${{\tilde R}_{E1}}\left[ n \right]$ is a convex function about $z_h$ and $z_{ER}$ based on \textit{Lemma 1}. 
	Therefore, the upper bound of ${{\tilde R}_{E1}}\left[ n \right]$ is obtained as (\ref{UBRE1}), shown at the top of this  page,
	\begin{figure*}[ht]
		\begin{align}\label{UBRE1}
			&{\log _2}\left( { {1 + {\rho _S}\sqrt {\frac{{4\beta _0^3{K^2}}}{{{{\tilde d}_{SE}}\left[ n \right]{{\tilde d}_{RE}}\left[ n \right]{d_{SR}}}}}  + \frac{{{\rho _S}\beta _0^2{K^2}}}{{{d_{SR}}{{\tilde d}_{RE}}\left[ n \right]}} + \frac{{{\beta _0}{K^2}}}{{{{\tilde d}_{RE}}\left[ n \right]}} + }\frac{{{\rho _S}{\beta _0}}}{{{{\tilde d}_{SE}}\left[ n \right]}}} \right) \nonumber\\
			&\ge {\log _2}{A_4}\left[ n \right] + \frac{{{B_4}\left[ n \right]}}{{\ln 2{A_4}\left[ n \right]}}\left( {{z_h}\left[ n \right] - z_h^{\left( j \right)}\left[ n \right]} \right) +  \frac{{{C_4}\left[ n \right]}}{{\ln 2{A_4}\left[ n \right]}}\left( {{z_{ER}}\left[ n \right] - z_{ER}^{\left( j \right)}\left[ n \right]} \right) \buildrel \Delta \over = \varsigma _{RE1}^{{\mathrm{lb}}}\left[ n \right]
		\end{align}
		\hrulefill
	\end{figure*}
	where 
	${A_4}\left[ n \right] = 1 + \frac{{{\beta _0}{K^2}}}{{{\tilde d}_{RE}^{\left( j \right)}\left[ n \right]}} + \frac{{{\rho _S}{\beta _0}}}{{{\tilde d}_{SE}^{\left( j \right)}\left[ n \right]}} + \frac{{{\rho _S}\beta _0^2{K^2}}}{{{d_{SR}}{{\tilde d}_{RE}^{\left( j \right)}\left[ n \right]}}} + {\rho _S}\sqrt {\frac{{4\beta _0^3{K^2}}}{{{d_{SR}}{{\tilde d}_{SE}^{\left( j \right)}\left[ n \right]}{{\tilde d}_{RE}^{\left( j \right)}\left[ n \right]}}}}$,
	${B_4}\left[ n \right] =  - \frac{{{\beta _0}{K^2}}}{{{{\left( {{\tilde d}_{RE}^{\left( j \right)}\left[ n \right]} \right)}^2}}} - \frac{{\beta _0^2{K^2}{\rho _S}}}{{{d_{SR}}{{\left( {{\tilde d}_{RE}^{\left( j \right)}\left[ n \right]} \right)}^2}}} - {\rho _S}\sqrt {\frac{{4\beta _0^3{K^2}}}{{{{\tilde d}_{SE}^{\left( j \right)}\left[ n \right]}{d_{SR}}{{\left( {{\tilde d}_{RE}^{\left( j \right)}\left[ n \right]} \right)}^3}}}}$, 
	${C_4}\left[ n \right] =  - \frac{{{\beta _0}{\rho _S}}}{{{{\left( {{\tilde d}_{SE}^{\left( j \right)}\left[ n \right]} \right)}^2}}} - {\rho _S}\sqrt {\frac{{4\beta _0^3{K^2}}}{{{d_{SR}}{{\left( {{\tilde d}_{SE}^{\left( j \right)}\left[ n \right]} \right)}^3}{{\tilde d}_{RE}^{\left( j \right)}\left[ n \right]}}}} $, 
	$z_{ER}^{\left( j \right)}$ and $z_h^{\left( j \right)}$ denote the $j$-th iteration of $z_{ER}$ and $z_h$, 
	${{\tilde d}_{SE}^{\left( j \right)}\left[ n \right]}$, ${{\tilde d}_{ED}^{\left( j \right)}\left[ n \right]}$, ${{\tilde d}_{RE}^{\left( j \right)}\left[ n \right]}$ represent the $j$-th iteration values of ${{{\tilde d}_{SE}}}$, ${{{\tilde d}_{ED}}}$, ${{{\tilde d}_{RE}}}$, respectively.
	
	Similarly, the lower-bound of $z_E^2\left[ n \right]$ is obtained as
	\begin{align}\label{ZE2}
		z_E^2\left[ n \right] &\ge {\left({z_E^{\left( j \right)}}\left[ n \right]\right)^2} + 2{z_E^{\left( j \right)}}\left[ n \right]\left({{z_E}\left[ n \right]} - {z_E^{\left( j \right)}}\left[ n \right]\right) \nonumber\\
		&\buildrel \Delta \over =  \varsigma _H^\mathrm{lb}\left[ n \right].
	\end{align}
	
	Similar to (\ref{P20b}), the SCA is utilized to deal with ${\tilde R}_D\left[ n \right]$ and ${{\tilde R}_{E2}}\left[ n \right]$, which are given as 
	\begin{align}
		{\tilde R}_D\left[ n \right] &\le {\log _2}{A_5}\left[ n \right] + \frac{{{B_5}\left[ n \right]}}{{\ln 2{A_5}\left[ n \right]}}\left({z_h}\left[ n \right]-z_h^{\left( j \right)}\left[ n \right]\right) \nonumber\\
		& \buildrel \Delta \over =  \varsigma _{RD}^\mathrm{ub}\left[ n \right]
	\end{align}
	and 
	\begin{align}	\label{tildRE2}
		{{\tilde R}_{E2}}\left[ n \right] &= {\log _2}\left(1 + \frac{{{\beta _0}{K^2}}}{{{{\left\| {{{\mathbf{q}}_E}\left[ n \right] - {{\mathbf{q}}_R}} \right\|}^2} + {z_{ER}\left[ n \right]}}}\right) \nonumber\\
		&\ge {\log _2}{A_6}\left[ n \right] + \frac{{{B_6}\left[ n \right]}}{{\ln 2{A_6}\left[ n \right]}}\left({z_{ER}}\left[ n \right] - z_{ER}^{\left( j \right)}\left[ n \right]\right) \nonumber\\
		& \buildrel \Delta \over = \varsigma _{RE2}^\mathrm{lb}\left[ n \right],
	\end{align}
	where
	${A_5}\left[ n \right] = 1 + \frac{{{K^2}h_{SR}^2 h_{RD}^2{P_S}}}{{\left( {1 + {K^2}h_{RD}^2} \right){\sigma ^2} + \frac{{{P_E}\left[ n \right]{\beta _0}}}{{\tilde d_{ED}^{\left( j \right)}\left[ n \right]}}}}$, 
	${B_5}\left[ n \right] = \frac{{{K^2}h_{SR}^2 h_{RD}^2{P_S}\frac{{{P_E}\left[ n \right]{\beta _0}}}{{{{(\tilde d_{ED}^{\left( j \right)}\left[ n \right])}^2}}}}}{{{{\left( {\left( {1 + {K^2}h_{RD}^2} \right){\sigma ^2} + \frac{{{P_E}\left[ n \right]{\beta _0}}}{{\tilde d_{ED}^{\left( j \right)}\left[ n \right]}}} \right)}^2}}}$, 
	${A_6}\left[ n \right] = 1 + \frac{{{\beta _0}{K^2}}}{{{{\left\| {{{\mathbf{q}}_E}\left[ n \right] - {{\mathbf{q}}_R}} \right\|}^2} + z_{ER}^{\left( j \right)}\left[ n \right]}}$, 
	${B_6}\left[ n \right] =  - \frac{{{\beta _0}{K^2}}}{{{{\left({{\left\| {{{\mathbf{q}}_E}\left[ n \right] - {{\mathbf{q}}_R}} \right\|}^2} + z_{ER}^{\left( j \right)}\left[ n \right]\right)}^2}}}$,
	and 
	$\tilde d_{ED}^{\left( j \right)}$ denotes the $j$th iteration values of ${{{\tilde d}_{ED}}}$.
	
	Then, $\mathcal{P}_{2.2}$ is transformed as
	\begin{subequations}
		\begin{align}
			\mathcal{P}_{2.3} \,:\,&\mathop {\max }\limits_{{\mathbf{Z}},\mathbf{S_2},\mathbf{Z_1}, {v}_{z}, {a}_{z}} {\Theta _2} \\
			{\mathrm{s.t.}}\; 	&S_{4}\left[ n \right] \le \varsigma _{RD}^\mathrm{ub}\left[ n \right],\\
			&S_{5}\left[ n \right] \le \varsigma _{RE1}^\mathrm{lb}\left[ n \right],\\
			&S_{6}\left[ n \right] \ge \varsigma _{RE2}^\mathrm{lb}\left[ n \right],\\
			&S_{4}\left[ n \right] \le S_{5}\left[ n \right] - S_{6}\left[ n \right],\\
			&{z_h}\left[ n \right] \le \varsigma _H^\mathrm{lb}\left[ n \right],\\
			&{z_{ER}}\left[ n \right] \ge {\left({{z_E}\left[ n \right]} - {z_R}\right)^2},\\
			& (\textrm{\ref{p2b}})-(\textrm{\ref{p2h}}). \notag 
		\end{align}
	\end{subequations}
	The above problem is a standard convex problem, which can be solved by CVX.

	\begin{algorithm}[t]
		\caption{An iterative algorithm for joint optimization of 3D trajectory and jamming power}
		\begin{algorithmic}[1]  
			\STATE \textbf{Input}: ${{\mathbf{q}}_S}$, ${{\mathbf{q}}_D}$, $\mathbf{q}_I$, $\mathbf{q}_F$, $v^{\max}_{xy}$, $a^{\max}_{xy}$, $v^{\max}_{z}$, $a^{\max}_{z}$,  $H_{\max}$, $H_{\min}$,
			\STATE \textbf{Output:}$\mathbf{P}$, $\mathbf{Q}$, $\mathbf{Z}$
			\STATE  Initialization: set initial variables; tolerance $\varepsilon > 0$ and iteration number $j=0$.
			\STATE \textbf{Repeat}    
			\STATE With given $\mathbf{P}^{\left( {j} \right)}$, $ \mathbf{Z}^{{\left( {j} \right)}}$, obtain $\mathbf{Q}^{{\left( {j + 1} \right)}}$, $\mathbf{S_1}^{{\left( {j + 1} \right)}}$, $\mathbf{d}^{{\left( {j + 1} \right)}}$ by solving problem  ($\mathcal{P}_{1.{\mathrm{1b}}}$)
			\STATE With given $\mathbf{P}^{\left( {j} \right)}$, $ \mathbf{Z}^{\left( {j} \right)}$, $\mathbf{Q}^{{\left( {j + 1} \right)}}$, $\mathbf{S_1}^{{\left( {j + 1} \right)}}$, $\mathbf{d}^{{\left( {j + 1} \right)}}$, obtain $\mathbf{Z}^{{\left( {j + 1} \right)}}$, $\mathbf{S_2}^{{\left( {j + 1} \right)}}$, $\mathbf{Z_1}^{{\left( {j + 1} \right)}}$ by solving problem  ($\mathcal{P}_{2.{{3}}}$)
			\STATE With given $\mathbf{Q}^{{\left( {j + 1} \right)}}$, $\mathbf{Z}^{{\left( {j + 1} \right)}}$, $\mathbf{S_2}^{{\left( {j + 1} \right)}}$, $\mathbf{Z_1}^{{\left( {j + 1} \right)}}$,  obtain $\mathbf{P}^{{\left( {j + 1} \right)}}$ by solving problem ($\mathcal{P}_{1.{\mathrm{2a}}}$)
			\STATE $j=j+1$
			\STATE \textbf{Until} {$\left| {{\Theta ^{\left( {j + 1} \right)}} - {\Theta ^{\left( j \right)}}} \right| \le \varepsilon $ or reaches the maximum number of iterations}
			\STATE Obtain the optimal solution
		\end{algorithmic} 
	\end{algorithm}
	
	\textbf{Algorithm 2} summarizes the details of overall iterations for $\mathcal{P}_{2}$. 
	The initial trajectory is set as a straight line, which is given by
	\begin{subequations}
		\begin{align}
			{{\mathbf{q}}_E}\left[ n \right]&=\mathbf{q}_I+\frac{n}{N+1}\left(\mathbf{q}_F-\mathbf{q}_I\right),n=0,1,...,N,\\
			{{z_E}\left[ n \right]} &= {H_{\mathrm{I}}} + \frac{n}{{N + 1}}\left( {{H_{\mathrm{F}}} - {H_{\mathrm{I}}}} \right),n =0, 1,...,N.
			\label{eq3.5} 
		\end{align}
	\end{subequations}

		\label{Convergence2}
		In this work, the ER maximization problem is decomposed into several subproblems. 
		The SCA method is utilized to solve the sub-problems and the BCD method is utilized to iteratively solve the optimization problem until the algorithm converges or the number of iterations reaches the maximum value. 
		Algorithm 1 has two sub-problems, and Algorithm 2 has three sub-problems. 
		The complexity of our proposed algorithm is $\mathcal{O}\left(N_{iter}\left(\sum\limits_{k = 1}^K {\left( {N_k^{3.5}} \right)}\right) \log \left( {\frac{1}{\varepsilon }} \right)\right)$, 
		where $N_{iter}$ is the number of iterations, $N_k$ represents the number of variables in the subproblems, 
		and $K$ is the number of subproblems \cite{LiuT2021TGCN,LiangY2022TVT} .

	\begin{figure*}[t]
		\centering
		\subfigure[The flying trajectory of $E$.]{
			\label{fig2a}
			\includegraphics[width = 0.23 \textwidth]{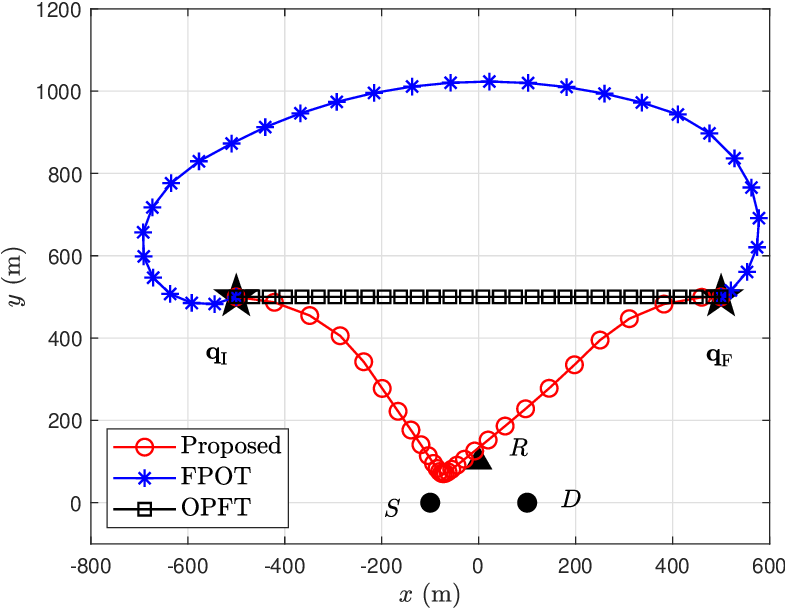}}
		\subfigure[The flying speed of $E$.]{
			\label{fig2b}
			\includegraphics[width = 0.23 \textwidth]{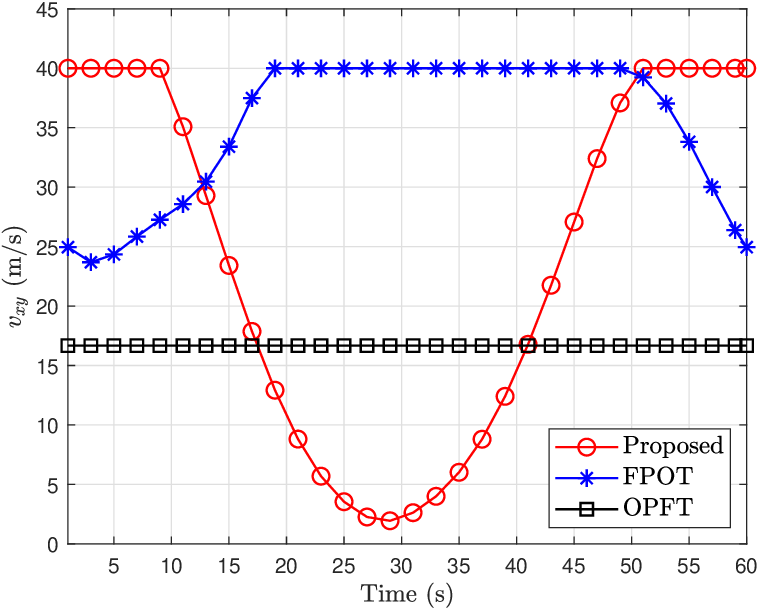}}	
		\subfigure[The jamming and flying power of $E$.]{
			\label{fig2c}
			\includegraphics[width = 0.23 \textwidth]{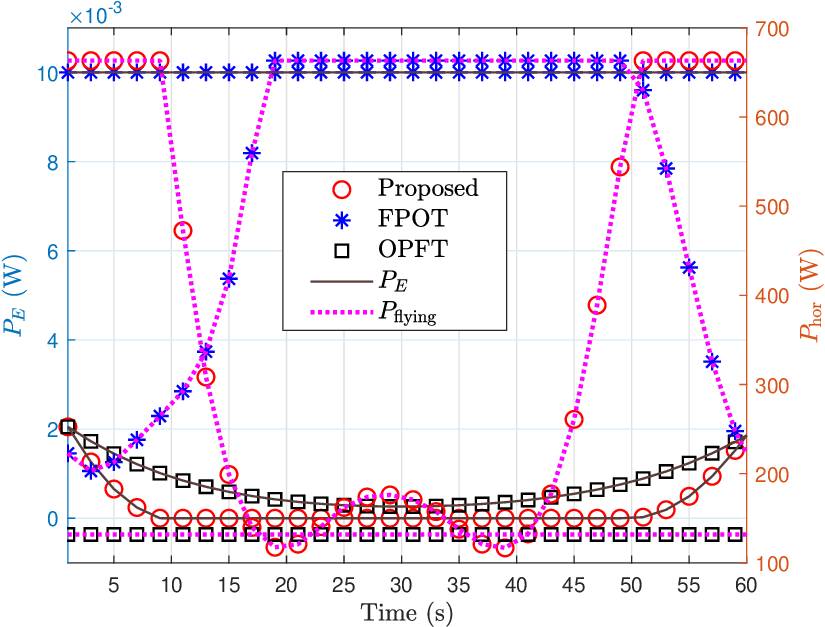}}
		\subfigure[The ER.]{
			\label{fig2d}
			\includegraphics[width = 0.23 \textwidth]{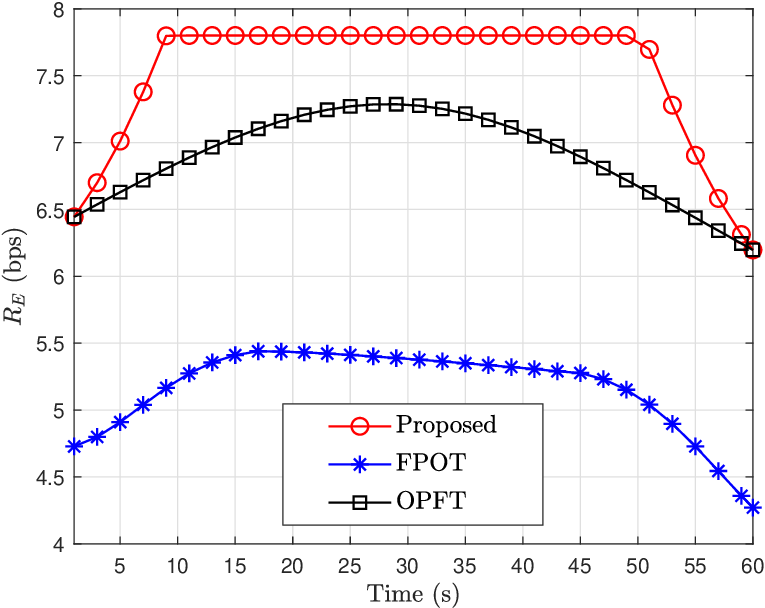}}
		\caption{Scenario 1.}
		\label{fig2}
	\end{figure*}
	\begin{figure*}[t]
		\centering
		\subfigure[The flying trajectory of $E$.]{
			\label{fig3a}
			\includegraphics[width = 0.23 \textwidth]{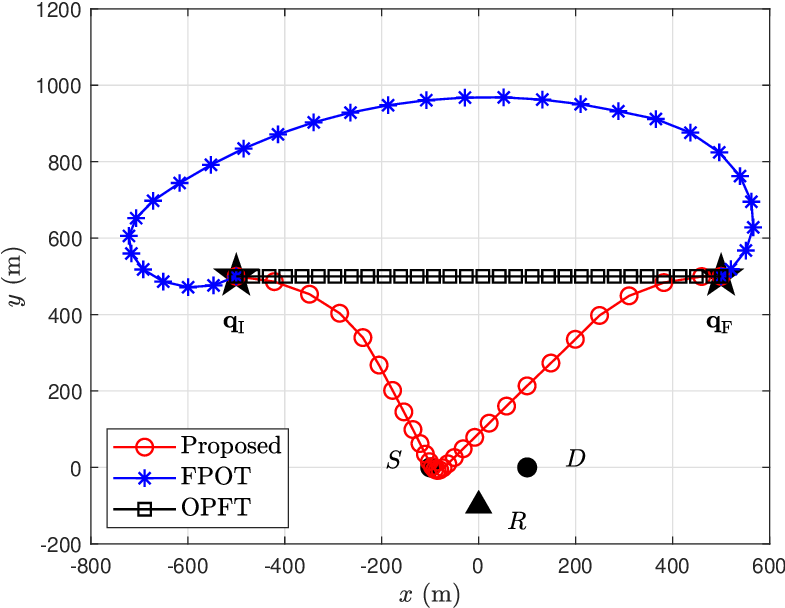}}
		\subfigure[The flying speed of $E$.]{
			\label{fig3b}
			\includegraphics[width = 0.23 \textwidth]{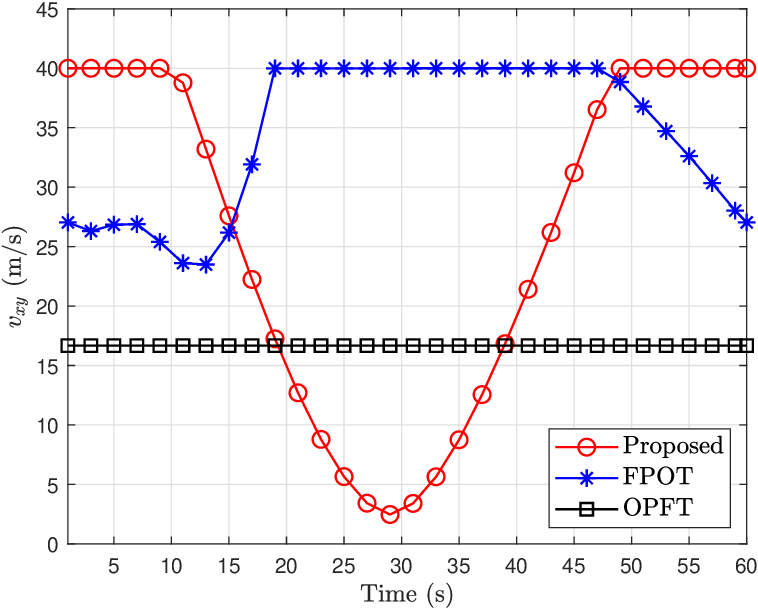}}
		\subfigure[The jamming and flying power of $E$.]{
			\label{fig3c}
			\includegraphics[width = 0.23 \textwidth]{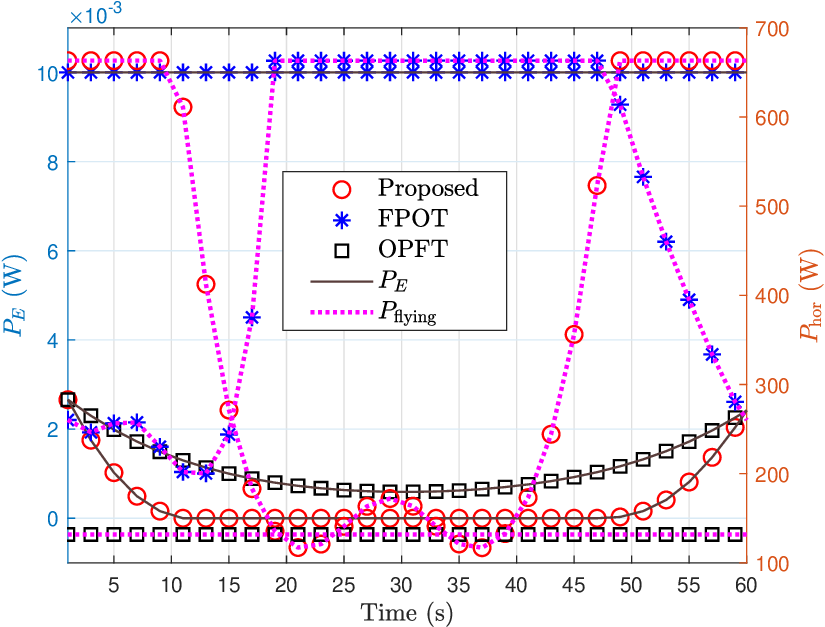}}
		\subfigure[The ER.]{
			\label{fig3d}
			\includegraphics[width = 0.23 \textwidth]{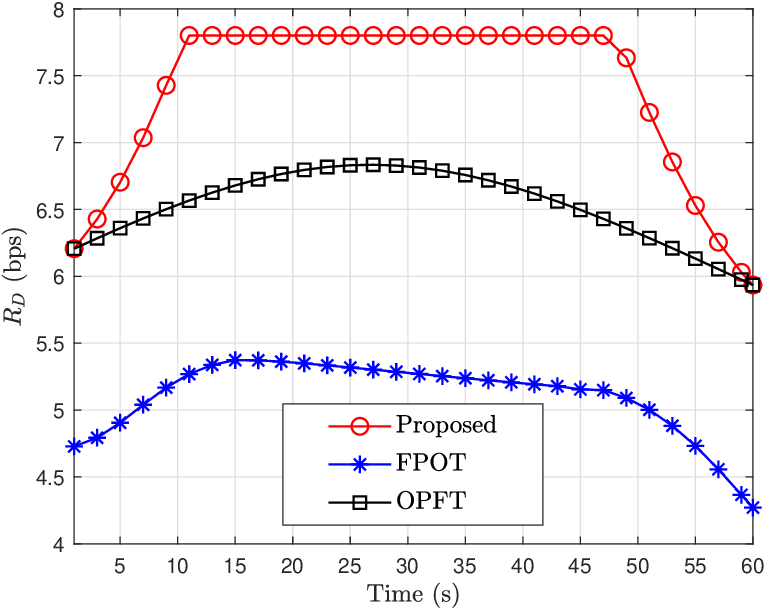}}
		\caption{Scenario 2.}
		\label{fig3}
	\end{figure*}

	\section{Numerical Results and Discussion}
	\label{sec:Simulation}

	In this section, we show numerical results to demonstrate the effectiveness of our proposed algorithm.
	The details of the simulation parameters are given in Table \ref{table}.
	The following schemes are utilized as benchmarks to verify the effectiveness of our proposed algorithm. 
	1) Benchmark 1: with the given fixed jamming power, the flight trajectory of $E$ is optimized {(denoted as FPOT)}.
	2) Benchmark 2: with the given fixed flight trajectory, the jamming power of $E$ is optimized {(denoted as OPFT)}.
	
	\begin{table}[tb]
			\begin{center}
				\caption{\textit{Simulation Parameters}}
				\begin{tabular}{c| c| c| c} 
					\Xhline{1.2pt}
					Parameters    &Value & Parameters  &   Value   \\ 
					\hline
					${\mathbf{q}}_I$ &  	${\left[ { -500,500} \right]^T}$&${\mathbf{q}}_F$ &  ${\left[500,500\right]^T}$\\
					\hline
					${\mathbf{q}}_S $ &${\left[-100,0\right]^T}$&${\mathbf{q}}_R$ & \makecell[c]{${\left[ { 0,100} \right]^T}$,\,\, 	${\left[ { 0,-100} \right]^T}$}\\
					\hline
					${\mathbf{q}}_D$ & ${\left[100,0\right]^T}$ & ${{H}}_I$ & 100 m/200 m 	\\		
					\hline
					${{H}}_F$ & 200 m &	${{z}}_R$ & 50 m	\\				
					\hline
					$v^{\max}_{xy}$	& 40 m/s &	$v^{\max}_{z}$	&	30 m/s 				\\ 
					\hline
					$a^{\max}_{xy}$	& 5 m/${\mathrm{s}}^2$  &$a^{\max}_{z}$	&  3 m/${\mathrm{s}}^2$\\
					\hline
					$H_{max}$					& 200 m	&$H_{min}$				& 60 m\\
					\hline
					$ \sigma^2$				& -110 dBm&	$\beta _0$				& -50 dB			\\ 
					\hline
					$P_S = P_R = P$ 					& 10 dBm&$\varepsilon$  & 0.001     \\
					\hline
					${P_0}$&   59 w&${P_i}$&    124 w\\
					\hline
					${v_0}$&   4.03&${U_{\mathrm{tip}}}$&     200 m/s \\
					\hline
					${d_0}$&        0.6&	$\rho$ &         1.225 $kg/m^3$\\
					\hline
					$s$ &           0.05&$A$ &           0.503 $m^2$\\
					\hline
					$P_{\mathrm{hor}}^{\mathrm{ave} }$& 600 w&	$P_{\mathrm{ver}}^{\mathrm{ave} }$& 300 w\\
					\hline
					$W$ & 20 N &&\\
					\Xhline{1.2pt}
				\end{tabular}
				\label{table}
			\end{center}
	\end{table}
	

		$E$ receives signals from both from $S$ and $R$. To illustrate the effect of the two-hop signals on $E$, two scenarios with different locations of $R$ are considered.
	Figs. \ref{fig2} and \ref{fig3} plot the flight trajectory, flight speed, jamming power, and ER of the three schemes in two scenarios with different $R$ positions.
	From Figs. \ref{fig2a} and \ref{fig3a}, it can be seen that $E$ in the proposed scheme first flies to the $S$ and $R$, then hovers for some time and flies back to the endpoint. 
	The trajectory designed by the proposed scheme is much closer to $S$ and $R$ than the benchmarks.
	Based on Figs. \ref{fig2b} and \ref{fig3b}, we can observe that, at first, $E$ flies towards $S$ and $R$ at its maximum speed, then gradually decreases the speed close to $S$ and $R$, then gradually increases the speed away, and finally flies towards the end at its maximum speed.
	From Figs. \ref{fig2c} and \ref{fig3c}, it is obvious that, the closer to $S$ and $R$, the smaller the jamming power is. 
	{{
			This is the best scenario to maximize ER by remaining $R_D$ without AN as it is. Due to the change in flight speed, $E$ has higher flight power when it is far away from $S$ and $R$ and lower flight power when it is near $S$ and $R$.	}
		}	Figs. \ref{fig2d} and \ref{fig3d} demonstrate that the ER of the proposed algorithm outperforms that of benchmarks. 
	This indicates that optimizing both the flying trajectory and jamming power simultaneously is the optimal solution.
	Fig. \ref{fig3} shows the results in the scenario with different $R$'s locations. Since $R$ is farther from the starting point of $E$, the speed in Fig. \ref{fig3b} rises and decreases more slowly than that in Fig. \ref{fig2b}, and the initial jamming power in  Fig. \ref{fig3c} is higher than that in  Fig. \ref{fig2c}. The initial ER in  Fig. \ref{fig3d} is also lower than in  Fig. \ref{fig2d}. 
	
	\begin{figure*}[t]
		\centering
		\subfigure[$\bar R_D$ for varying $P$.]{
			\label{fig4a}
			\includegraphics[width = 0.4  \textwidth]{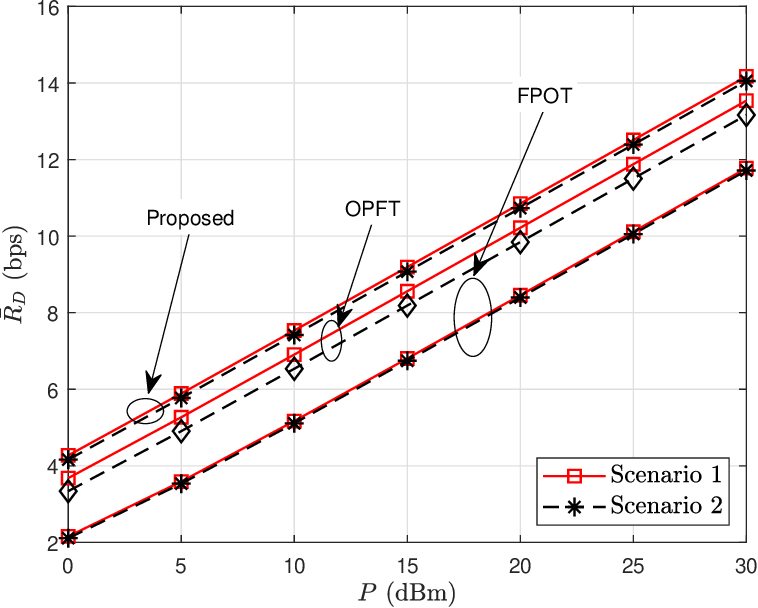}}
		\subfigure[$\bar R_D$ for varying $T$.]{
			\label{fig4b}
			\includegraphics[width = 0.4  \textwidth]{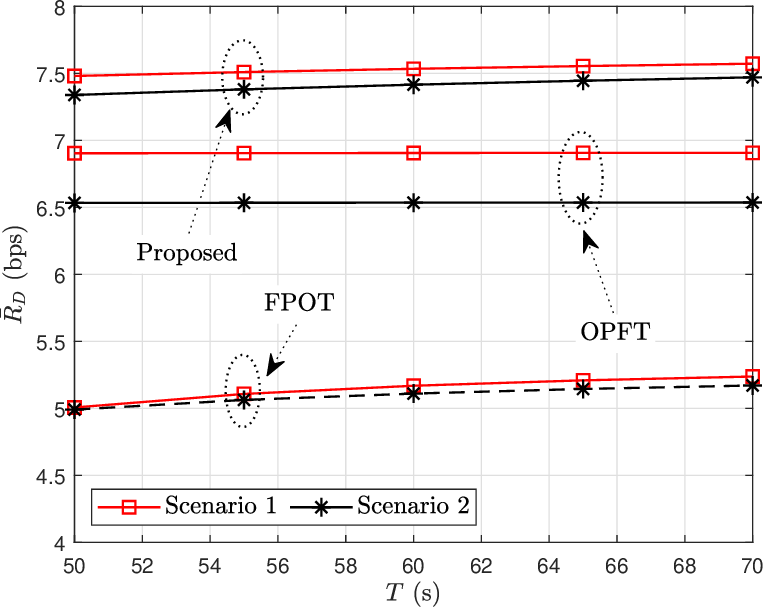}}
		\caption{AER for varying $P$ and $T$.}
		\label{fig4}
	\end{figure*}

Fig. \ref{fig4} describes the AER of all the schemes for varying transmitting power $P$ and flight period $T$. 
In Fig. \ref{fig4a}, the AER increases as $P$ increases. 
It can be seen that the AER gradually increases with the increase of $P$ in Fig. \ref{fig4a}. 
The effect of the flight period on the ER is not apparent, which can be observed from Fig. \ref{fig4b}.
The results in Fig. \ref{fig4} prove that the ER of the proposed scheme is the largest, while that of FPOT is the smallest, indicating that optimizing the jamming power is more effective than optimizing the trajectory. The jamming power and trajectory are optimized at the same time.
Comparing the results in different scenarios, it can be found that the position of $R$ has a more significant impact on OPFT.

\begin{figure}[t]
	\centering
	\includegraphics[width = 3 in]{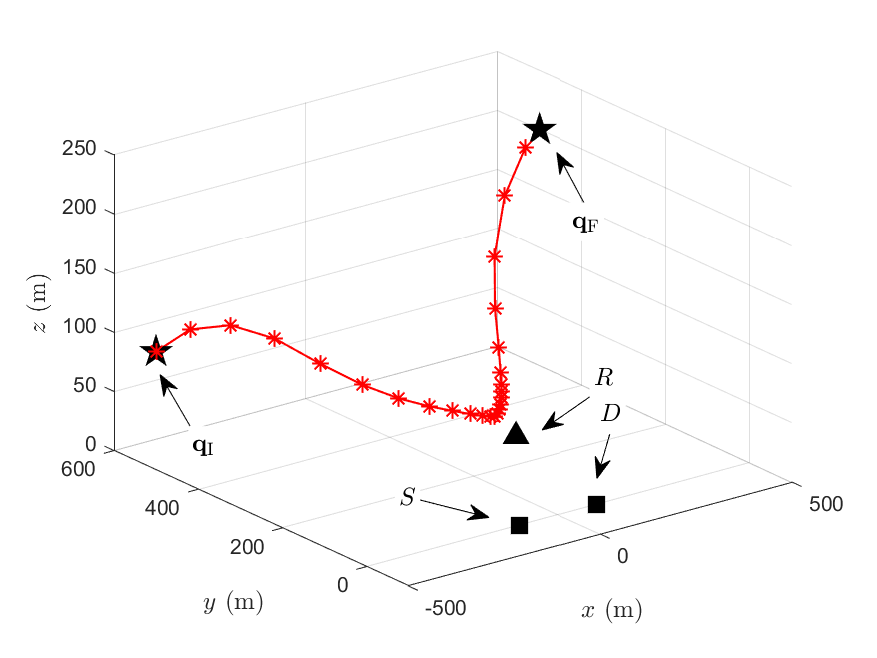}
	\caption{The 3D trajectory of $E$.}
	\label{fig5}
\end{figure}

Fig. \ref{fig5} plots the trajectories in scenarios where the initial and final points of $E$ have different altitudes. 	
Fig. \ref{fig6} shows the projection trajectory, the altitude, horizontal speed, and vertical speed of $E$.
In Fig. \ref{fig6a}, it can be found that the trajectory is basically the same as that in Fig. \ref{fig2}. 
Specifically, $E$ gradually approaches $S$ and $R$ from the starting point, hovers around $S$ and $R$ for some slots, and finally returns to the endpoint.
In Fig. \ref{fig6b}, we can see that $E$ first raises the flight altitude, then descends after some time and gradually approaches $S$ and $R$, then stabilizes at the altitude for some time, lowers some altitude, and then quickly raises the altitude to return to the endpoint. 
Moreover, as the flight period increases, $E$'s flight altitude does not decrease significantly.
The horizontal and vertical velocities of the 3D trajectory are given in Figs. \ref{fig6c} and \ref{fig6d}, respectively.
In Fig. \ref{fig6c}, $E$ gradually decreases after a period of maximum horizontal speed, then gradually increases its speed at a low speed for a while, and finally returns to the endpoint at maximum level.
In Fig. \ref{fig6d}, the vertical velocity of $E$ decreases first and then rises. After maintaining the vertical velocity near 0 m/s for a while, the vertical velocity decreases again, then gradually rises, and finally falls again.

	\begin{figure*}[t]
		\centering
		\subfigure[The 2D projection of $E$.]{
			\label{fig6a}
			\includegraphics[width = 0.23 \textwidth]{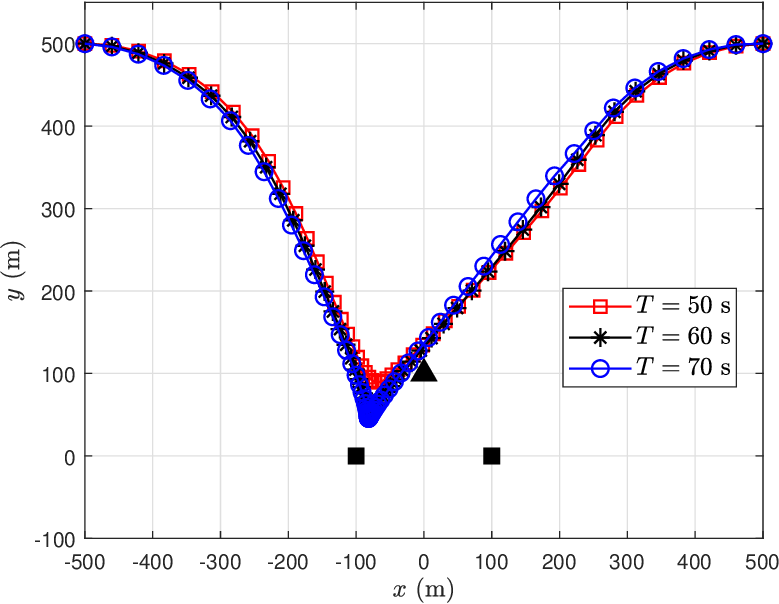}}
		\subfigure[The height of $E$.]{
			\label{fig6b}
			\includegraphics[width = 0.23 \textwidth]{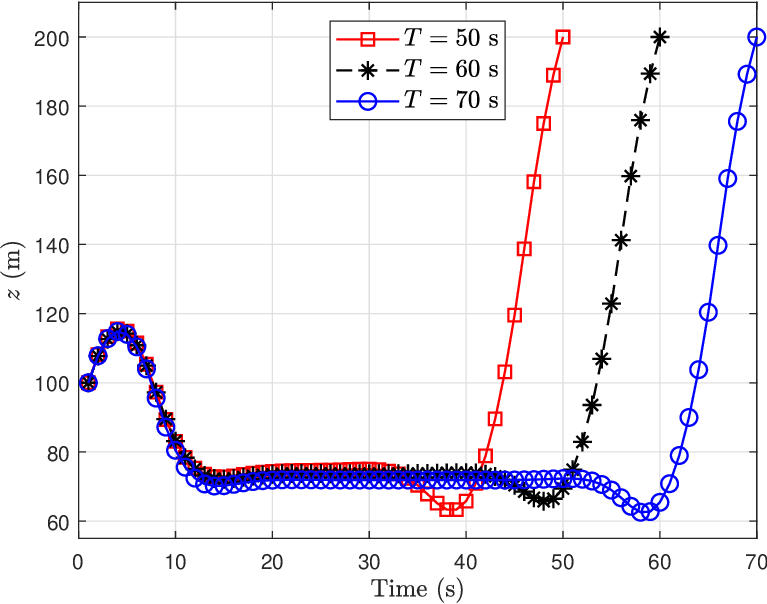}}	
		\subfigure[The horizontal velocity of $E$.]{
			\label{fig6c}
			\includegraphics[width = 0.23 \textwidth]{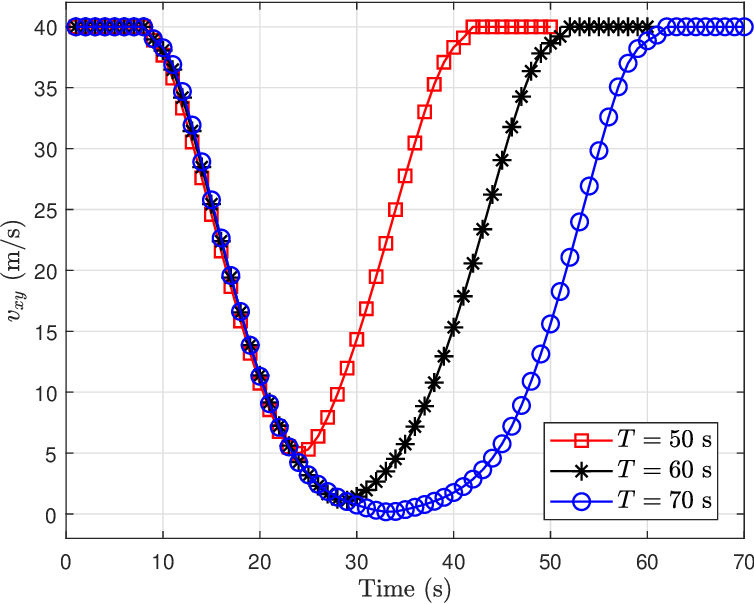}}
		\subfigure[The vertical velocity of $E$.]{
			\label{fig6d}
			\includegraphics[width = 0.23 \textwidth]{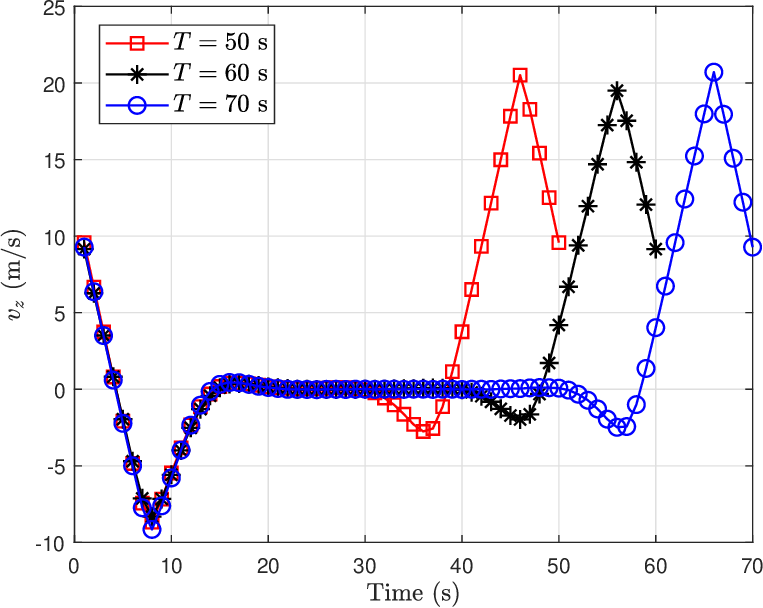}}
		\caption{3D flight trajectories of $E$.}
		\label{fig6}
	\end{figure*}

	\begin{figure*}[t]
		\centering
		\subfigure[Jamming and flying power for varying $T$.]{
			\label{fig7a}
			\includegraphics[width = 0.4  \textwidth]{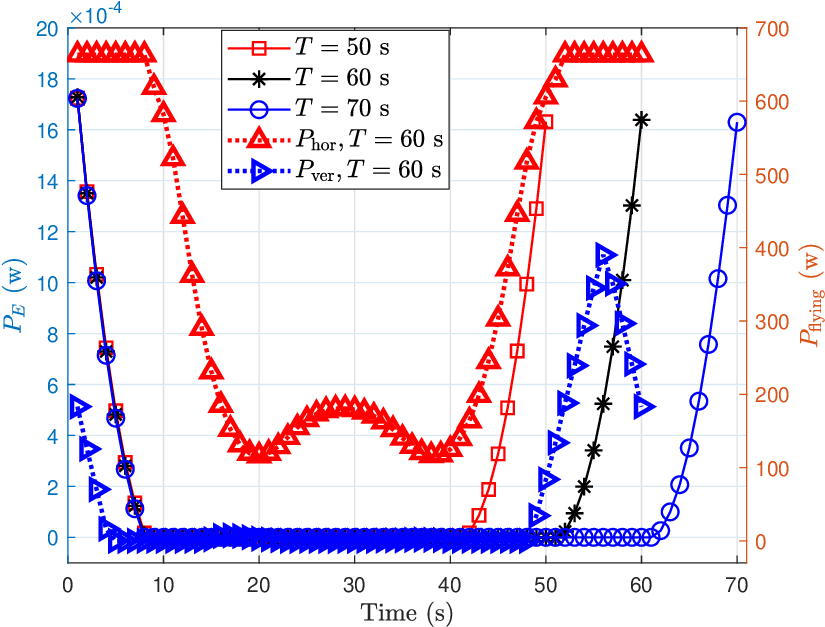}}
		\subfigure[ER for varying $T$.]{
			\label{fig7b}
			\includegraphics[width = 0.4  \textwidth]{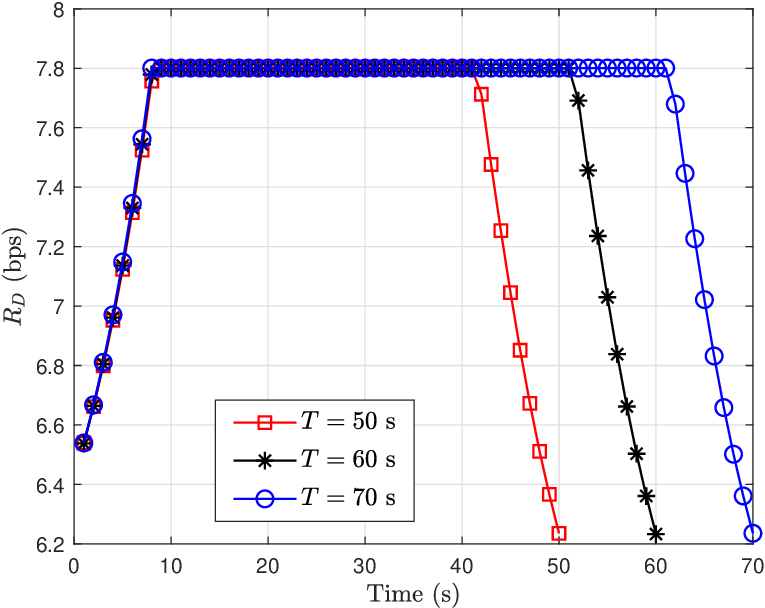}}
		\caption{Jamming power, flying power, and ER of UAV for varying $T$.}
		\label{fig7}
	\end{figure*}

	{Fig. \ref{fig7} shows the jamming power, flight power, and ER of $E$ for varying flight periods. 
	From Fig. \ref{fig7a}, it can be seen that, at the starting point, the jamming power is considerable enough to interfere with $D$. As $E$ approaches $S$ and $R$, the jamming power gradually decreases until it reaches zero, the horizontal flying power is gradually reduced, and the vertical flight power is also close to zero.} 
	Fig. \ref{fig7b} shows that the ER gradually increases. The longer the running time, the more time will be available for interference-free eavesdropping.

	\begin{figure}[t]
		\centering
		\includegraphics[width = 3 in]{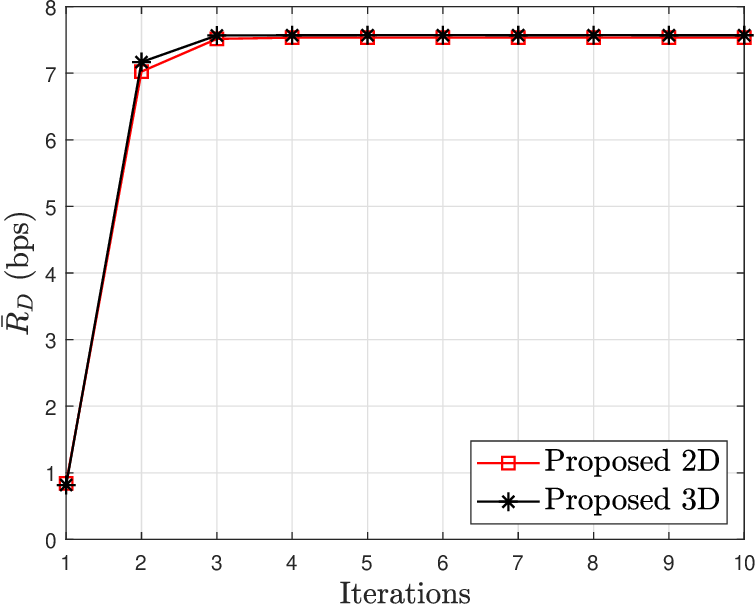}
		\caption{Algorithm convergence.}
		\label{fig8}
	\end{figure}

	Fig. \ref{fig8} illustrates the convergence of our proposed iterative algorithm. 
	It is observed that the AER increases quickly within a small number of iterations and rapidly converges within around four iterations. 
	It can be seen that the iterative algorithm's AER tends to be constant when iterating four times, which means the algorithm converges fast.

\section{Conclusion}
\label{sec:Conclusions}
	
This work investigated the eavesdropping performance of an illegitimate cooperative system with a proactive eavesdropper UAV. 
The jamming power and trajectory of the UAV were jointly considered to maximize the ER. 
An iterative algorithm based on successive convex approximation and block coordinate descent method was proposed to solve this problem. 
Simulation results show the effectiveness of our proposed algorithm. 
{ 
	In this work, the scenarios with single-antenna were considered. The proactive eavesdropping problem of an illegitimate multi-antenna cooperative system with a proactive eavesdropper UAV is an exciting topic that will be addressed in future work. 
}		
		
	\begin{appendices}
		
		\section{Proof of Lemma 1 }
		\label{sec:appendicesA}	
		
		The first-order partial derivation of $f$ with respect to $x$ and $y$ are obtained as
		\begin{align}\label{derfx} 
			\frac{{\partial f}}{{\partial x}} &= \frac{{ - {c_2}{x^{ - 2}} - \frac{1}{2}{c_4}{x^{ - \frac{3}{2}}}{y^{ - \frac{1}{2}}}}}{{\ln 2({c_1} + {c_2}{x^{ - 1}} + {c_3}{y^{ - 1}} + {c_4}{x^{ - \frac{1}{2}}}{y^{ - \frac{1}{2}}})}}
		\end{align}
		and
		\begin{align}\label{derfy} 
			\frac{{\partial f}}{{\partial y}} &= \frac{{ - {c_3}{y^{ - 2}} - \frac{1}{2}{c_4}{x^{ - \frac{1}{2}}}{y^{ - \frac{3}{2}}}}}{{\ln 2({c_1} + {c_2}{x^{ - 1}} + {c_3}{y^{ - 1}} + {c_4}{x^{ - \frac{1}{2}}}{y^{ - \frac{1}{2}}})}},
		\end{align}
		respectively.
		
		The second-order partial derivation of $f$ with respect to $x$ and $y$ are obtained as
		\begin{align}\label{derfx2} 
			\frac{{{\partial ^2}f}}{{\partial {x^2}}} &= \eta (2{c_1}{c_2}{x^{ - 3}} + {c_2}{c_2}{x^{ - 4}} + 2{c_2}{c_3}{x^{ - 3}}{y^{ - 1}}  \notag \\
			& + {c_2}{c_4}{x^{ - \frac{7}{2}}}{y^{ - \frac{1}{2}}} + \frac{3}{4}{c_1}{c_4}{x^{ - \frac{5}{2}}}{y^{ - \frac{1}{2}}} + \frac{3}{4}{c_2}{c_4}{x^{ - \frac{7}{2}}}{y^{ - \frac{1}{2}}}  \notag \\&+ \frac{3}{4}{c_3}{c_4}{x^{ - \frac{5}{2}}}{y^{ - \frac{3}{2}}} + \frac{2}{4}{c_4}{c_4}{x^{ - \frac{6}{2}}}{y^{ - \frac{2}{2}}}),
		\end{align}
		\begin{align}\label{derfxy}
			\frac{{{\partial ^2}f}}{{\partial x\partial y}} &= \eta (\frac{1}{4}{c_1}{c_4}{x^{ - \frac{3}{2}}}{y^{ - \frac{3}{2}}} - {c_2}{c_3}{x^{ - 2}}{y^{ - 2}}  \notag \\
			& - \frac{1}{4}{c_2}{c_4}{x^{ - \frac{5}{2}}}{y^{ - \frac{3}{2}}} - \frac{1}{4}{c_3}{c_4}{x^{ - \frac{3}{2}}}{y^{ - \frac{5}{2}}}),
		\end{align}
		and
		\begin{align}\label{derfy2} 
			\frac{{{\partial ^2}f}}{{\partial {y^2}}} &=\eta (2{c_1}{c_3}{y^{ - 3}} + 2{c_2}{c_3}{x^{ - 1}}{y^{ - 3}} + {c_3}{c_3}{y^{ - 4}}  \notag\\
			&+ {c_3}{c_4}{x^{ - \frac{1}{2}}}{y^{ - \frac{7}{2}}} + \frac{3}{4}{c_1}{c_4}{x^{ - \frac{1}{2}}}{y^{ - \frac{5}{2}}} + \frac{3}{4}{c_2}{c_4}{x^{ - \frac{3}{2}}}{y^{ - \frac{5}{2}}} \notag \\
			& + \frac{3}{4}{c_3}{c_4}{x^{ - \frac{1}{2}}}{y^{ - \frac{7}{2}}} + \frac{2}{4}{c_4}{c_4}{x^{ - \frac{2}{2}}}{y^{ - \frac{6}{2}}}),
		\end{align}
		where
		$\eta = {{{{(\ln 2({c_1} + {c_2}{x^{ - 1}} + {c_3}{y^{ - 1}} + {c_4}{x^{ - \frac{1}{2}}}{y^{ - \frac{1}{2}}}))}^{-2}}}}$.
		
		The Hessian matrix of $f$ is expressed as
		\begin{align}\label{Hessian}
			{\nabla ^2}f &= \left( {\begin{array}{*{20}{c}}
					{\frac{{{\partial ^2}f}}{{\partial {x^2}}}}&{\frac{{{\partial ^2}f}}{{\partial x\partial y}}},\\
					{\frac{{{\partial ^2}f}}{{\partial y\partial x}}}&{\frac{{{\partial ^2}f}}{{\partial {y^2}}}}
			\end{array}} \right) \nonumber\\
			&= \frac{{{\partial ^2}f}}{{\partial {x^2}}}\frac{{{\partial ^2}f}}{{\partial {y^2}}} - \frac{{{\partial ^2}f}}{{\partial x\partial y}}\frac{{{\partial ^2}f}}{{\partial y\partial x}} > 0,
		\end{align}
		
		Due to $\frac{{{\partial ^2}f}}{{\partial {x^2}}} > 0$ and ${\nabla ^2}f> 0$, $f$ is a convex function of $x$ and $y$.
		
	\end{appendices}


\begin{thebibliography}{1}
		
		\bibitem{ZengY2016CM}
		Y. Zeng, R. Zhang, and T. J. Lim, ``Wireless communications with unmanned aerial vehicles: Opportunities and challenges,'' \textit{IEEE Commun. Mag.}, vol. 54, no. 5, pp. 36-42, May 2016.
		
		\bibitem{Hayat2016COMST}
		S. Hayat, E. Yanmaz, and R. Muzaffar, ``Survey on unmanned aerial vehicle networks for civil applications: A communications viewpoint,'' \textit{IEEE Commun. Surveys Tuts.}, vol. 18, no. 4, pp. 2624-2661, 4th Quart. 2016.
		
		\bibitem{Bander2020JNCA}
		B. Alzahrani, O. S. Oubbati, A. Barnawi, M. Atiquzzaman, and D. Alghazzawi, ``UAV assistance paradigm: State-of-the-art in applications and challenges,'' \textit{J. Netw. Comput. Appl.}, vol. 166, Sep. 2020.
		
		\bibitem{Sahoo2022IOT}
		B. P. S. Sahoo, D. Puthal, and P. K. Sharma, ``Toward advanced UAV communications: Properties, research challenges, and future potential,'' \textit{IEEE Internet Things Mag.}, vol. 5, no. 1, pp. 154-159, Mar. 2022.
		
		\bibitem{ZhaoN2019WC}
		N. Zhao, W. Lu, M. Sheng, Y. F. Chen, J. Tang, F. R. Yu, and K. K. Wong, ``UAV-assisted emergency networks in disasters,'' \textit{IEEE Wireless Commun.}, vol. 26, no. 1, pp. 45-51, Feb. 2019.
		
		\bibitem{YangH2022IoT}
		H. Yang, R. Ruby, Q.-V. Pham, and K. Wu, ``Aiding a disaster spot via multi-UAV-based IoT networks: Energy and mission completion time-aware trajectory optimization,''\textit{ IEEE Internet Things J.}, vol. 9, no. 8, pp. 5853-5867, Apr. 2022.
		
		\bibitem{ZhangC2021TCOM}
		C. Zhang, L. Zhang, L. Zhu, T. Zhang, Z. Xiao, and X.-G. Xia, ``3D deployment of multiple UAV-mounted base stations for UAV communications,'' \textit{IEEE Trans. Commun.}, vol. 69, no. 4, pp. 2473-2488, Apr. 2021.
		
		\bibitem{LiY2021TWC}
		Y. Li, H. Zhang, K. Long, C. Jiang, and M. Guizani, ``Joint resource allocation and trajectory optimization with QoS in UAV-based NOMA wireless networks,'' \textit{IEEE Trans. Wireless Commun.}, vol. 20, no. 10, pp. 6343-6355, Oct. 2021.
		
		\bibitem{LiuT2021TGCN}
		T. Liu, M. Cui, G. Zhang, Q. Wu, X. Chu, and J. Zhang, ``3D trajectory and transmit power optimization for UAV-enabled multi-link relaying systems,'' \textit{IEEE Trans. Green Commun. Netw.}, vol. 5, no. 1, pp. 392-405, Mar. 2021.
		
		\bibitem{LiangY2022TVT}
		Y. Liang, L. Xiao, D. Yang, Y. Liu, and T. Zhang, ``Joint trajectory and resource optimization for UAV-aided two-way relay networks,'' \textit{IEEE Trans. Veh. Technol.}, vol. 71, no. 1, pp. 639-652, Jan. 2022.
		
		\bibitem{FengT2022TWC}
		T. Feng, L. Xie, J. Yao, and J. Xu, ``UAV-enabled data collection for wireless sensor networks with distributed beamforming,'' \textit{IEEE Trans. Wireless Commun.}, vol. 21, no. 2, pp. 1347-1361, Feb. 2022.
		
		\bibitem{Samir2020TWC}
		M. Samir, S. Sharafeddine, C. M. Assi, T. M. Nguyen, and A. Ghrayeb, ``UAV trajectory planning for data collection from time-constrained IoT device,'' \textit{IEEE Trans. Wireless Commun.}, vol. 19, no. 1, pp. 34-46, Jan. 2020.
		
		\bibitem{Dabiri2020CL}
		M. T. Dabiri and S. M. S. Sadough, ``Optimal placement of UAV-assisted free-space optical communication systems with DF relaying,'' \textit{IEEE Commun. Lett.}, vol. 24, no. 1, pp. 155-158, Jan. 2020.
		
		\bibitem{TangG2022SJ}
		G. Tang, P. Du, H. Lei, I. S. Ansari, and Y. Fu, ``Trajectory design and communication resources allocation for wireless powered secure UAV communication systems,'' \textit{IEEE Syst. J.}, vol. 16, no. 4, pp. 6300-6308, Dec. 2022.
		
		\bibitem{ZhongC2019CL}
		C. Zhong, J. Yao, and J. Xu, ``Secure UAV communication with cooperative jamming and trajectory control,'' \textit{IEEE Commun. Lett.}, vol. 23, no. 2, pp. 286-289, Feb. 2019.
		
		\bibitem{XuY2021TCOM}
		Y. Xu, T. Zhang, D. Yang, Y. Liu, and M. Tao, ``Joint resource and trajectory optimization for security in UAV-assisted MEC systems,'' \textit{IEEE Trans. Commun.}, vol. 69, no. 1, pp. 573-588, Jan. 2021.
		
		\bibitem{LeiH2023IoT}
		H. Lei, H. Yang, K.-H. Park, I. S. Ansari, J. Jiang, and M.-S. Alouini, ``Joint trajectory design and user scheduling for secure aerial underlay IoT systems," \textit{IEEE Internet Things J.}, vol. 10, no. 15, pp. 13637-13648, Aug. 2023.
		
		\bibitem{ZhouY2022TVT}
		Y. Zhou, P. L. Yeoh, C. Pan, K. Wang, Z. Ma, B. Vucetic, and Y. Li, ``Caching and UAV friendly jamming for secure communications with active eavesdropping attacks,'' \textit{IEEE Trans. Veh. Technol.}, vol. 71, no. 10, pp. 11251-11256, Oct. 2022.
		
		\bibitem{HanD2020ChinaCom}
		D. Han and T. Shi, ``Secrecy capacity maximization for a UAV-assisted MEC system,'' \textit{China Commun.}, vol. 17, no. 10, pp. 64-81, Oct. 2020.
		
		\bibitem{DuoB2021ChinaCom}
		B. Duo, J. Luo, Y. Li, H. Hu, and Z. Wang, ``Joint trajectory and power optimization for securing UAV communications against active eavesdropping,'' \textit{China Commun.}, vol. 18, no. 1, pp. 88-99, Jan. 2021.
		
		\bibitem{Savkin2020WC}
		A. V. Savkin, H. Huang, and W. Ni, ``Securing UAV communication in the presence of stationary or mobile eavesdroppers via online 3D trajectory planning," I\textit{EEE Wireless Commun. Lett.}, vol. 9, no. 8, pp. 1211-1215, Aug. 2020.
		
		\bibitem{WuJ2023TVT}
		J. Wu, W. Yuan, and L. Hanzo, ``When UAVs meet ISAC: Real-time trajectory design for secure communications,'' \textit{IEEE Trans. Veh. Technol.}, vol. 72, no. 12, pp. 16766-16771, Dec. 2023.
		
		\bibitem{ZengY2016TSP}
		Y. Zeng and R. Zhang, ``Wireless information surveillance via proactive eavesdropping with spoofing relay," \textit{IEEE J. Sel. Topics Signal Process.}, vol. 10, no. 8, pp. 1449-1461, Dec. 2016.
		
		\bibitem{XuJ2017TWC}
		 J. Xu, L. Duan, and R. Zhang, ``Proactive eavesdropping via cognitive jamming in fading channels," \textit{IEEE Trans. Wireless Commun.}, vol. 16, no. 5, pp. 2790-2806, May 2017.
		
		\bibitem{XuJ2017WC}
		J. Xu, L. Duan, and R. Zhang, ``Surveillance and intervention of infrastructure-free mobile communications: A new wireless security paradigm," \textit{IEEE Wireless Commun.}, vol. 24, no. 4, pp. 152-159, Aug. 2017.
		
		
		\bibitem{XuD2022TWC}
		D. Xu and H. Zhu, ``Proactive eavesdropping for wireless information surveillance under suspicious communication quality-of-service constraint,'' \textit{IEEE Trans. Wireless Commun.}, vol. 21, no. 7, pp. 5220-5234, Jul. 2022.
		
		\bibitem{XuD2023TIFS}
		D. Xu and H. Zhu, ``Proactive eavesdropping of physical layer security aided suspicious communications in fading channels,'' \textit{IEEE Trans. Inf. Forensics Security}, vol. 18, pp. 1111-1126, Jan. 2023.
		
		\bibitem{ZhangH2020TWC}
		H. Zhang, L. Duan, and R. Zhang, ``Jamming-assisted proactive eavesdropping over two suspicious communication links,'' \textit{IEEE Trans. Wireless Commun.}, vol. 19, no. 7, pp. 4817-4830, Jul. 2020.
		
		\bibitem{ChenJ2022TVT}
		J. Chen, L. Tang, D. Guo, Y. Bai, L. Yang, and Y.-C. Liang, ``Proactive eavesdropping in massive MIMO-OFDM systems via deep reinforcement learning,'' \textit{IEEE Trans. Veh. Technol.}, vol. 71, no. 11, pp. 12315-12320, Nov. 2022.
		
		\bibitem{Feizi2020TCOM}
		F. Feizi, M. Mohammadi, Z. Mobini, and C. Tellambura, ``Proactive eavesdropping via jamming in full-duplex multi-antenna systems: Beamforming design and antenna selection,'' \textit{IEEE Trans. Commun.}, vol. 68, no. 12, pp. 7563-7577, Dec. 2020.
		
		\bibitem{LuH2019TVT}
		H. Lu, H. Zhang, H. Dai, W. Wu, and B. Wang, ``Proactive eavesdropping in UAV-aided suspicious communication systems,'' \textit{IEEE Trans. Veh. Technol.}, vol. 68, no. 2, pp. 1993-1997, Feb. 2019.
		
		\bibitem{HuangM2021WCNCW}
		M. Huang, Y. Chen, and X. Tao, ``Proactive eavesdropping in UAV systems via trajectory planning and power optimization," in Proc. \textit{2021 IEEE Wireless Communications and Networking Conference Workshops (WCNCW)}, Nanjing, China, Mar. 2021, pp. 1-6.
		
		\bibitem{GuoD2023TMC}
		D. Guo, L. Tang, X. Zhang, and Y.-C. Liang, ``Joint optimization of trajectory and jamming power for multiple UAV-aided proactive eavesdropping,'' \textit{IEEE Trans. Mobile Comput.}, vol. 23, no. 5, pp. 5770-5785, May 2024
		
		\bibitem{DanQ2022phycom}
		Q. Dan, F. Wu, M. Xu, D. Yang, Y. Yu, Y. Zhang, ``Proactive eavesdropping rate maximization with UAV spoofing relay,'' \textit{Phys. Commun.}, vol. 55, Dec. 2022.
		
		
		\bibitem{JiangX2017SPL}
		X. Jiang, H. Lin, C. Zhong, X. Chen, and Z. Zhang, ``Proactive eavesdropping in relaying systems,'' \textit{IEEE Signal Process. Lett.}, vol. 24, no. 6, pp. 917-921, Jun. 2017.
		
		\bibitem{HuG2021SJ}
		G. Hu, J. Ouyang, Y. Cai, and Y. Cai, ``Proactive eavesdropping in two-way amplify-and-forward relay networks,'' \textit{IEEE Syst. J.}, vol. 15, no. 3, pp. 3415-3426, Sep. 2021.
		
		\bibitem{HuG2022SPL}
		G. Hu, J. Si, Y. Cai, and N. Al-Dhahir, ``Proactive eavesdropping via jamming in UAV-enabled relaying systems with statistical CSI,'' \textit{IEEE Signal Process. Lett.}, vol. 29, pp. 1267-1271, Jun. 2022.
		
		\bibitem{HuG2020CL}
		G. Hu and Y. Cai, ``Legitimate eavesdropping in UAV-based relaying system," \textit{IEEE Commun. Lett.}, vol. 24, no. 10, pp. 2275-2279, Oct. 2020.
		
		
		\bibitem{WuZ2023OPENJ}
		Z. Wu, K. Guo, X. Li, Y.-D. Zhang, M. Zhu, H. Song, and M. Wu, ``Proactive eavesdropping performance for integrated satellite–terrestrial relay networks,'' \textit{IEEE Open J. Commun. Soc.}, vol. 4, pp. 2985-2999, Nov. 2023.
		
		\bibitem{HuG2020CLAF}
		G. Hu and Y. Cai, ``UAVs-assisted proactive eavesdropping in AF multi-relay system," \textit{IEEE Commun. Lett.}, vol. 24, no. 3, pp. 501-505, Mar. 2020.
		
		\bibitem{Moon2018TWC}
		J. Moon, H. Lee, C. Song, S. Lee, and I. Lee, ``Proactive eavesdropping with full-duplex relay and cooperative jamming,'' \textit{IEEE Trans. Wireless Commun.}, vol. 17, no. 10, pp. 6707-6719, Oct. 2018.
		
		\bibitem{GeY2022IOT}
		Y. Ge and P. C. Ching, ``Energy efficiency for proactive eavesdropping in cooperative cognitive radio networks,'' \textit{IEEE Internet Things J.}, vol. 9, no. 15, pp. 13443-13457, Aug. 2022.
		
		\bibitem{WuQ2018TWC}	
		Q. Wu, Y. Zeng, and R. Zhang, ``Joint trajectory and communication design for multi-UAV enabled wireless networks," \emph{IEEE Trans. Wireless Commun.}, vol. 17, no. 3, pp. 2109-2121, Mar. 2018.
		
		\bibitem{CuiM2018TVT}
		M. Cui, G. Zhang, Q. Wu, and D. W. K. Ng, ``Robust trajectory and transmit power design for secure UAV communications,'' \textit{IEEE Trans. Veh. Technol.},vol. 67, no. 9, pp. 9042-9046, Sept. 2018.
		
		\bibitem{ZhangR2021TWC}
		R. Zhang, X. Pang, W. Lu, N. Zhao, Y. Chen, and D. Niyato, ``Dual-UAV enabled secure data collection with propulsion limitation,'' \textit{IEEE Trans. Wireless Commun.}, vol. 20, no. 11, pp. 7445-7459, Jun. 2021.
		
\bibitem{HuaM2020TCOM}
M. Hua, L. Yang, Q. Wu, and A. L. Swindlehurst, ``3D UAV trajectory and communication design for simultaneous uplink and downlink transmission,'' \textit{IEEE Trans. Veh. Technol.}, vol. 68, no. 9, pp. 5908-5923, Sept. 2020.

\bibitem{LeiH2023ArXivHaos}
H. Lei, J. Jiang, H. Yang, K.-H. Park, I. S. Ansari, G. Pan, and M.-S. Alouini. ``Trajectory and power design for aerial CRNs with colluding eavesdroppers," \textit{arXiv:2310.13931}, Oct. 2023, [Online]: https://arxiv.org/abs/2310.13931

\bibitem{ZengY2019TWC}
Y. Zeng, J. Xu, and R. Zhang, ``Energy minimization for wireless communication with rotary-wing UAV,'' \textit{IEEE Trans. Wireless Commun.}, vol. 18, no. 4, pp. 2329-2345, Apr. 2019.

\bibitem{FilipponeABook}
A. Filippone, \textit{Flight Performance of Fixed and Rotary Wing Aircraft.} Amsterdam, The Netherlands: Elsevier, 2006.
		
		
	\end{thebibliography}
\end{document}